\documentclass[12pt]{article}
\usepackage{graphicx}
\usepackage{xcolor}
\usepackage{enumerate}
\usepackage{amsmath}
\usepackage{amssymb}
\usepackage{amsthm}
\usepackage{geometry}
\usepackage{hyperref}

\newtheorem{definition}{Definition}
\newtheorem{theorem}{Theorem}
\newtheorem{proposition}{Proposition}

\title{
A Geometric Family of Correlations Containing the Quantum Singlet}

\author{E. Aldo Arroyo\thanks{aldo.arroyo@ufabc.edu.br},\\
    Centro de Ci\^{e}ncias Naturais e Humanas, Universidade Federal do ABC,\\
    Santo Andr\'{e}, 09210-170 S\~{a}o Paulo, SP, Brazil.}

\date{\today}

\begin{document}

\maketitle

\begin{abstract}

We introduce a geometrically constrained hidden-variable framework
that generates a family of correlations parametrized by a boundary
function, within which the quantum singlet correlation appears as
a particular member. Exact expressions for the correlation
function are derived. Several structural results are established,
including admissibility conditions, symmetry properties, a
universal stationary point of the associated CHSH function, and an
exact relation between the CHSH value at $\nu=\pi/4$ and a
geometric contrast measure defined on the underlying
hidden-variable distributions. Rather than treating the quantum
singlet correlation as an isolated target to be reproduced, the
present framework places it within a broader geometric structure
of correlations. These results suggest the existence of a
nontrivial geometric structure underlying the family of
correlations and motivate the search for a principle capable of
selecting the quantum singlet solution from within that family.

\end{abstract}

\section{Introduction}
\label{sec:intro}

The violation of Bell inequalities \cite{bell1964,chsh1969} by
quantum mechanical correlations is often interpreted as
demonstrating that any underlying local realistic theory must be
incompatible with quantum mechanics. This conclusion, however,
rests on several assumptions, one of the most critical being that
the hidden variables $\lambda$ are statistically independent of
the measurement settings $\mathbf{a},\mathbf{b}$, i.e.
\[
\rho(\lambda|\mathbf{a},\mathbf{b}) = \rho(\lambda).
\label{eq:mi_assumption}
\]
This condition is commonly referred to as measurement
independence, statistical independence, or the free-choice
assumption \cite{hall2010,hall2011,hall2015}.

The violation of measurement independence has often been regarded
as problematic, since it is frequently interpreted as requiring a
conspiracy-like correlation between hidden variables and
measurement settings \cite{bell1981,mermin1985}.

This concern is often motivated by the view that correlations
between hidden variables and measurement settings necessarily
require some form of causal connection between them. As Kupczynski
\cite{kupczynski2023} and others
\cite{brans1988,palmer2024,friedman2019,arroyo2025} have
emphasized, however, correlation alone does not imply causation;
such correlations may arise from a common cause or, as we will
argue, from a global geometric constraint on the space of
physically realizable configurations.

The logical gap in the conspiracy objection was already noted by
Brans \cite{brans1988}, who exhibited an explicit local
deterministic model of singlet correlations by relaxing
measurement independence. Hall \cite{hall2015} further quantified
that less than $1/15$ bits of prior correlation suffice for a
local model of the singlet state and argued that measurement
independence should be distinguished from operational notions of
free choice.

More recently, Hance, Hossenfelder and Palmer \cite{hance2022}
introduced the concept of a supermeasured theory, in which
Bell-statistical independence is violated not because of a causal
correlation between hidden variables and settings, but because the
measure on the state space is nontrivial. In such theories, the
set of physically admissible tuples $(\lambda,\mathbf a,\mathbf
b)$ is a proper subset of the Cartesian product
$\Lambda\times\mathcal A\times\mathcal B$, reflecting a global
compatibility condition rather than a causal conspiracy.

Related ideas have also been explored in other approaches to the
foundations of quantum theory. Retrocausal models allow hidden
variables to depend on both past and future boundary conditions,
thereby relaxing the assumptions entering Bell's theorem without
invoking superluminal influences \cite{price2012,adlam2022}.
Closely related all-at-once or Lagrangian-based frameworks
emphasize global spacetime constraints rather than dynamical
causal mechanisms \cite{wharton2010,wharton2015}. In a different
direction, Palmer's invariant-set approach attributes the apparent
violation of Bell-statistical independence to geometric
restrictions on a fractal subset of state space
\cite{palmer2009,palmer2025}. Although these approaches differ
substantially in their physical interpretation, they share the
idea that the space of physically realizable configurations may be
more restricted than is usually assumed.

In this work, we propose a geometric framework in which, rather
than postulating an explicit dependence of the hidden-variable
distribution on the measurement settings, we start from a
geometrically restricted admissible region
\[
\Gamma \subset \Lambda\times\mathcal{A}\times\mathcal{B},
\]
defined explicitly in terms of a boundary function $\chi(\gamma)$,
where $\gamma$ is the angular separation between the measurement
directions. The construction is fully local and deterministic: the
outcomes are given by
$A(\mathbf{a},\lambda)=\operatorname{sign}(\mathbf{a}\cdot\lambda)$
and
$B(\mathbf{b},\lambda)=-\operatorname{sign}(\mathbf{b}\cdot\lambda)$.
Admissible hidden variables are those whose azimuthal angle $\phi$
falls outside two antipodal forbidden intervals whose sizes are
controlled by $\chi(\gamma)$.

Let us emphasize that this geometric restriction does not imply
superluminal signaling or any dynamical non-local mechanism. The
local response functions $A(\mathbf a,\lambda)$ and $B(\mathbf
b,\lambda)$ depend only on their respective local measurement
settings and the hidden variable. Instead, the present framework
is conceptually closer to approaches based on global constraints,
variational principles, or spacetime-wide consistency conditions,
in which the physically realizable configuration space is
restricted from the outset rather than arising from a purely local
dynamical evolution. From this perspective, Bell-statistical
independence is relaxed through a restriction on the admissible
configuration space rather than through an explicit causal
connection between settings and hidden variables.

Crucially, the admissible region $\Gamma_\gamma$ for fixed
settings retains the antipodal symmetry
\[
\Phi_\gamma+\pi = \Phi_\gamma,
\]
which guarantees unbiased marginals $\langle A\rangle=\langle
B\rangle=0$ for any admissible $\chi$. Thus, while the
hidden-variable distribution depends on \(\gamma\), the local
outcomes remain perfectly random and no local statistical bias is
introduced by the geometric restriction.

We derive the exact correlation function $E(\gamma)$ for any
admissible $\chi$,
\[
E(\gamma) =
\begin{cases}
\displaystyle \frac{4\gamma-\pi-2\chi(\gamma)}{3\pi-2\chi(\gamma)}, & 0<\gamma<\frac{\pi}{2},\\[8mm]
\displaystyle \frac{4\gamma-\pi-2\chi(\gamma)}{\pi+2\chi(\gamma)},
& \frac{\pi}{2}<\gamma<\pi.
\end{cases}
\]

The quantum singlet correlation $E(\gamma)=-\cos\gamma$ emerges as
a particular member of the admissible family, corresponding to a
specific boundary function \(\chi_{\mathrm s}(\gamma)\). This
observation shifts the problem from reproducing the quantum
correlation to understanding what geometric principle, if any,
singles out the quantum singlet solution from the broader
admissible family.

Within the reflection-symmetric subclass
$\mathcal{R}_{\mathrm{adm}}$ (defined by
$\chi(\pi-\gamma)=\pi-\chi(\gamma)$), we prove that the CHSH
function $S(\nu)=3E(\nu)-E(3\nu)$ satisfies $S'(\pi/4)=0$ for
every admissible boundary function
$\chi\in\mathcal{R}_{\mathrm{adm}}$, establishing $\nu=\pi/4$ as a
universal stationary point of the reflection-symmetric admissible
family. Moreover, we establish an exact relation between the CHSH
value at the distinguished point $\nu=\pi/4$ and the geometric
contrast $D(\pi/4)$ between the hidden-variable ensembles
associated with complementary angles:
\[
\left|S\!\left(\frac{\pi}{4}\right)\right| = 2 +
3\,D\!\left(\frac{\pi}{4}\right).
\]
This relation identifies a direct correspondence between the
strength of CHSH correlations and the distinguishability of
complementary hidden-variable ensembles.

The paper is organized as follows. In Section~\ref{sec:geometry}
we define the hidden-variable space, the local observables, the
admissible boundary function $\chi(\gamma)$, and the associated
admissible region, area, and distribution.
Section~\ref{sec:correlations} presents the general correlation
function and proves the vanishing marginals. In
Section~\ref{sec:symmetric_family} we introduce the
reflection-symmetric admissible family and examine three important
members: the constant model, the quadratic model, and the quantum
singlet solution. Section~\ref{sec:chsh_structure} analyzes the
CHSH function, establishes the universal stationary point, and
derives the geometric interpretation of the CHSH value at
$\nu=\pi/4$. Finally, Section~\ref{sec:conclusions} discusses the
implications of our findings and outlines open questions,
including the search for a geometric principle that might select
the quantum singlet solution.

\section{Geometric Construction of the Hidden-Variable Model}
\label{sec:geometry}

Let the hidden variable be a unit vector in $\mathbb{R}^{3}$,
\begin{equation}
\boldsymbol{\lambda} = (\sin\theta\cos\phi,\;
\sin\theta\sin\phi,\; \cos\theta) \in S^{2} \equiv \Lambda,
\label{eq:lambda_spherical}
\end{equation}
where $\theta\in[0,\pi]$ and $\phi\in[0,2\pi)$ are the standard
spherical coordinates.

Let the unit vectors $\mathbf a,\mathbf b$ denote the measurement
directions of Alice and Bob. Without loss of generality, we choose
these vectors as follows
\begin{equation}
\mathbf a=(1,0,0), \qquad \mathbf b=(\cos\gamma,\sin\gamma,0),
\label{eq:a_b_fixed}
\end{equation}
where
\begin{equation}
\gamma = \arccos(\mathbf a\cdot\mathbf b) \in[0,\pi]
\label{eq:gamma_def}
\end{equation}
is the angular separation between the measurement settings.

The entire construction developed below depends only on the
relative orientation of the measurement settings. Consequently,
all geometric quantities may be regarded as functions of the
single angular variable $\gamma$.

Throughout this work, the symbol $\phi$ denotes the azimuthal
coordinate of the hidden variable $\boldsymbol{\lambda}$, whereas
$\gamma$ denotes the angular separation between the measurement
settings.

The hidden-variable model is completed by specifying the local
response functions associated with the measurement directions
$\mathbf a$ and $\mathbf b$.

For the choice of coordinates introduced above, the measurement
outcomes are taken to be deterministic functions of the hidden
variable $\boldsymbol{\lambda}$ and are defined by

\begin{equation}
A(\mathbf a,\boldsymbol{\lambda}) = \operatorname{sign} \!\left(
\mathbf a\cdot\boldsymbol{\lambda} \right) =
\operatorname{sign}(\cos\phi), \label{eq:A_definition}
\end{equation}

and

\begin{equation}
B(\mathbf b,\boldsymbol{\lambda}) = -\operatorname{sign} \!\left(
\mathbf b\cdot\boldsymbol{\lambda} \right) = -\operatorname{sign}
\!\left[ \cos(\phi-\gamma) \right]. \label{eq:B_definition}
\end{equation}

The minus sign in the definition of $B$ is introduced so that the
quantum singlet correlation corresponds to an anti-correlated
configuration when the measurement directions coincide.

By construction,

\begin{equation}
A(\mathbf a,\boldsymbol{\lambda}), \; B(\mathbf
b,\boldsymbol{\lambda}) \in \{-1,+1\},
\label{eq:binary_observables}
\end{equation}

for almost every hidden variable $\boldsymbol{\lambda}\in\Lambda$.

The corresponding product observable is therefore

\begin{equation}
A(\mathbf a,\boldsymbol{\lambda}) B(\mathbf
b,\boldsymbol{\lambda}) = - \operatorname{sign} \!\left[
\cos\phi\, \cos(\phi-\gamma) \right]. \label{eq:AB_definition}
\end{equation}

The expectation values and correlation functions derived in the
following sections are obtained by averaging the observables
(\ref{eq:A_definition})--(\ref{eq:AB_definition}) with respect to
the admissible probability distribution
$\rho_{\gamma}(\boldsymbol{\lambda})$ defined in
Definition~\ref{def:admissible_distribution}.

\begin{definition}[Admissible Boundary Function]
\label{def:admissible_chi}

A function $\chi(\gamma)$ is called admissible if it satisfies

\begin{equation}
\frac{\pi}{2} < \chi(\gamma) < \frac{\pi}{2}+\gamma, \qquad
0<\gamma<\frac{\pi}{2}, \label{eq:admissible_low}
\end{equation}

and

\begin{equation}
\gamma-\frac{\pi}{2} < \chi(\gamma) < \frac{\pi}{2}, \qquad
\frac{\pi}{2}<\gamma<\pi. \label{eq:admissible_high}
\end{equation}

The function $\chi(\gamma)$ specifies the boundary between
admissible and forbidden azimuthal sectors.
\end{definition}

Different admissible choices of $\chi$ generate different
admissible hidden-variable models and consequently different
correlation functions.

A particular admissible choice of $\chi$ will later be shown to
reproduce the quantum singlet correlation.

The two admissible branches meet at $\gamma=\frac{\pi}{2}$.
Therefore, any continuous admissible boundary function satisfies
$\chi\!\left(\frac{\pi}{2}\right) = \frac{\pi}{2}$. Moreover, the
admissibility inequalities imply
\[
\lim_{\gamma\to0^{+}} \chi(\gamma) = \lim_{\gamma\to\pi^{-}}
\chi(\gamma) = \frac{\pi}{2}.
\]

Hence, any continuous extension of $\chi$ to the closed interval
$[0,\pi]$ satisfies
\begin{equation}
\chi(0) = \chi\!\left(\frac{\pi}{2}\right) = \chi(\pi) =
\frac{\pi}{2}. \label{eqvalueschis} \end{equation}

Thus, every continuous admissible boundary function passes through
the three distinguished points
\[
(0,\pi/2), \qquad (\pi/2,\pi/2), \qquad (\pi,\pi/2).
\]

The differences between admissible models therefore arise entirely
from the behavior of $\chi$ in the interior of the interval
$(0,\pi)$.

\begin{figure}
\centering
\includegraphics[width=4.8in,height=115mm]{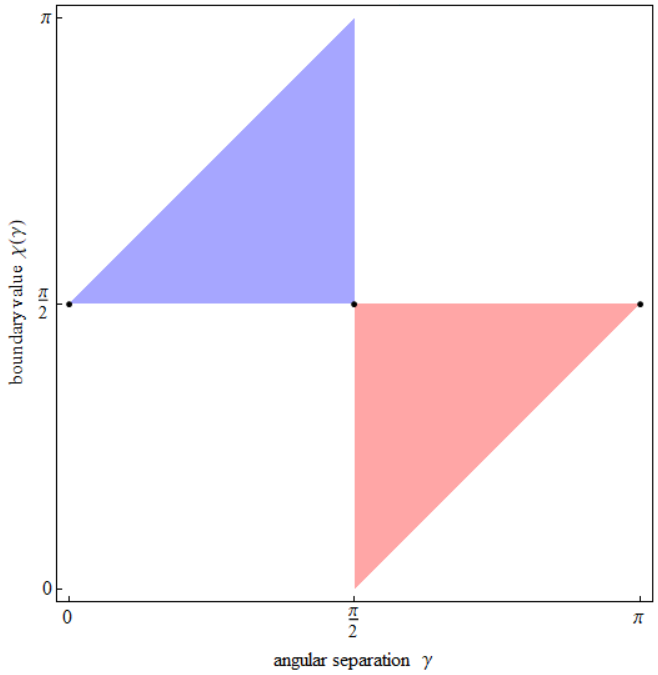}
\caption{Space of admissible boundary functions. The shaded region
defines the allowed values of $\chi(\gamma)$ for each angular
separation $\gamma$. Continuous admissible boundary functions form
a family of curves contained within this domain, all sharing the
fixed points $(0,\pi/2)$, $(\pi/2,\pi/2)$ and $(\pi,\pi/2)$. }
\label{fig:admissible_chi}
\end{figure}

Figure~\ref{fig:admissible_chi} provides a geometric
representation of the admissible boundary functions. The shaded
region defines the allowed domain in the $(\gamma,\chi)$ plane,
while continuous admissible functions correspond to curves
entirely contained within this domain.

Each admissible function determines a particular hidden-variable
model through the exclusion of two opposite azimuthal sectors from
the sphere. This construction preserves antipodal symmetry and
generates a family of admissible correlation models parametrized
by $\chi(\gamma)$.

\begin{definition}[Admissible Region]
\label{def:Gamma}

Let $\chi$ be an admissible boundary function.

For a given angular separation $\gamma$, define

\begin{align}
\Gamma_{\gamma} =
\begin{cases}
\Bigl\{ \boldsymbol{\lambda}\in\Lambda \;\Big|\; \phi \notin
\bigl(\frac{\pi}{2},\,\chi(\gamma)\bigr) \cup
\bigl(\frac{3\pi}{2},\,\chi(\gamma)+\pi\bigr) \Bigr\}, & 0<\gamma<\frac{\pi}{2},\\[6mm]
\Bigl\{ \boldsymbol{\lambda}\in\Lambda \;\Big|\; \phi \notin
\bigl(\chi(\gamma),\,\frac{\pi}{2}\bigr) \cup
\bigl(\chi(\gamma)+\pi,\,\frac{3\pi}{2}\bigr) \Bigr\}, &
\frac{\pi}{2}<\gamma<\pi.
\end{cases} \label{eq:Gamma_def}
\end{align}

The set

\[
\Gamma_{\gamma} \subset S^{2}
\]

will be called the admissible hidden-variable region associated
with the angular separation $\gamma$.
\end{definition}

Geometrically, $\Gamma_{\gamma}$ is obtained by removing two
azimuthal intervals from the hidden-variable sphere. The size and
location of these forbidden sectors are controlled by the boundary
function $\chi(\gamma)$.

As $\gamma$ varies, the admissible region changes continuously,
generating a family of geometrically distinct hidden-variable
ensembles.

\begin{definition}[Admissibility Function]
\label{def:admissibility_function}

For each angular separation $\gamma$, define

\begin{equation}
\mathcal I_{\gamma}(\boldsymbol{\lambda}) =
\begin{cases}
1, & \boldsymbol{\lambda}\in\Gamma_{\gamma},
\\[2mm]
0, & \boldsymbol{\lambda}\notin\Gamma_{\gamma}.
\end{cases}
\label{eq:admissibility_function} \end{equation}

\end{definition}

For practical calculations it is convenient to express the
admissibility function in terms of the azimuthal coordinate
$\phi$.

Introducing the Heaviside step function $H(x)$, the indicator
function of the admissible region may be written explicitly as

\begin{align}
\mathcal I_\gamma(\phi) = 1 - \operatorname{sign}\!\left(
\frac{\pi}{2} - \chi(\gamma) \right) \Bigg[
H\!\bigl(\phi-\chi(\gamma)\bigr) +
H\!\bigl(\phi-\chi(\gamma)-\pi\bigr) -
H\!\left(\phi-\frac{\pi}{2}\right) -
H\!\left(\phi-\frac{3\pi}{2}\right) \Bigg].
\label{eq:admissibility_representation}
\end{align}

This representation will be particularly useful for the evaluation
of the admissible area, probability distributions and correlation
integrals in spherical coordinates.

\begin{definition}[Admissible Area]
\label{def:admissible_area}

The area of the admissible region is defined by

\begin{equation}
\Omega(\gamma) = \int_{\Lambda} \mathcal
I_{\gamma}(\boldsymbol{\lambda}) \,d\lambda .
\label{eq:admissible_area} \end{equation}

\end{definition}

The quantity $\Omega(\gamma)$ measures the effective volume of
hidden-variable states that remain accessible after imposing the
geometric restrictions determined by $\chi(\gamma)$.

\begin{definition}[Admissible Distribution]
\label{def:admissible_distribution}

The normalized hidden-variable distribution associated with
$\gamma$ is

\begin{equation}
\rho_{\gamma}(\boldsymbol{\lambda}) = \frac{ \mathcal
I_{\gamma}(\boldsymbol{\lambda}) }{ \Omega(\gamma) }.
\label{eq:admissible_distribution} \end{equation}

\end{definition}

By construction,

\[
\int_{\Lambda} \rho_{\gamma}(\boldsymbol{\lambda}) \,d\lambda = 1.
\]

Therefore, each angular separation $\gamma$ induces a probability
distribution on the hidden-variable space. The dependence of
$\rho_{\gamma}$ on $\gamma$ encodes the geometric restrictions
imposed by the admissible boundary function $\chi$.

\section{Marginal Distributions and Correlation Functions}
\label{sec:correlations}

Having introduced the admissible hidden-variable distributions, we
now derive the statistical predictions associated with an
arbitrary admissible boundary function $\chi(\gamma)$.

Our first objective is to determine the local expectation values

\[
\langle A\rangle, \qquad \langle B\rangle,
\]

and the correlation function

\[
E(\gamma) = \langle AB\rangle .
\]

A central feature of the construction is that the antipodal
symmetry of the admissible regions leads to unbiased marginal
distributions independently of the choice of admissible boundary
function.

The following results hold for every admissible model.

\begin{theorem}[Vanishing Marginals]
\label{thm:vanishing_marginals}

For every admissible boundary function $\chi(\gamma)$ satisfying
Definition~\ref{def:admissible_chi}, the local expectation values
satisfy

\begin{equation}
\label{eq:vanishing_marginals} \langle A\rangle = \langle B\rangle
= 0.
\end{equation}

\end{theorem}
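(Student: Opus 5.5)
The plan is to use the antipodal symmetry of the admissible region, $\boldsymbol{\lambda}\mapsto-\boldsymbol{\lambda}$, which in spherical coordinates reads $(\theta,\phi)\mapsto(\pi-\theta,\phi+\pi)$ with $\phi$ taken mod $2\pi$. Write
\[
\langle A\rangle=\int_\Lambda A(\mathbf a,\boldsymbol{\lambda})\,\rho_\gamma(\boldsymbol{\lambda})\,d\lambda,\qquad
\langle B\rangle=\int_\Lambda B(\mathbf b,\boldsymbol{\lambda})\,\rho_\gamma(\boldsymbol{\lambda})\,d\lambda,
\]
and show that the integrand in each is odd under this map, while the measure and the density are invariant.

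The argument has three steps.

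\textbf{Step 1: $\Gamma_\gamma$ is antipodally invariant.} From Definition~\ref{def:Gamma}, the forbidden set in $\phi$ is a union of two intervals differing by a shift of exactly $\pi$. For $0<\gamma<\pi/2$ these are $(\pi/2,\chi)$ and $(3\pi/2,\chi+\pi)$, and the analogous pair arises for $\pi/2<\gamma<\pi$. Admissibility (Definition~\ref{def:admissible_chi}) guarantees that each interval has length less than $\pi$, so the two are disjoint and both lie inside $[0,2\pi)$. Indeed, $\chi+\pi<3\pi/2+\gamma<2\pi$ in the first case, and $\chi+\pi>\gamma+\pi/2>0$ in the second. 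Hence $\phi\mapsto\phi+\pi \pmod{2\pi}$ maps the forbidden set onto itself, and therefore $\mathcal I_\gamma(-\boldsymbol{\lambda})=\mathcal I_\gamma(\boldsymbol{\lambda})$. Since the map is an isometry of $S^2$, $d\lambda$ is preserved, and $\Omega(\gamma)$ is a constant for fixed $\gamma$. It follows that $\rho_\gamma(-\boldsymbol{\lambda})=\rho_\gamma(\boldsymbol{\lambda})$.

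\textbf{Step 2: the observables are odd.} We have $A(\mathbf a,-\boldsymbol{\lambda})=\operatorname{sign}(\cos(\phi+\pi))=-A(\mathbf a,\boldsymbol{\lambda})$, and likewise $B(\mathbf b,-\boldsymbol{\lambda})=-B(\mathbf b,\boldsymbol{\lambda})$. Both identities hold away from the measure-zero great circles where $\cos\phi=0$ or $\cos(\phi-\gamma)=0$.

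\textbf{Step 3: conclude.} Change variables $\boldsymbol{\lambda}\to-\boldsymbol{\lambda}$ in each integral. This gives $\langle A\rangle=-\langle A\rangle$ and $\langle B\rangle=-\langle B\rangle$, so both vanish.

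As a cross-check, I would also verify the result directly. The $\theta$-integral factors out as $\int_0^\pi\sin\theta\,d\theta=2$. In the $\phi$-integral, the removed arc on which $\operatorname{sign}\cos\phi$ has one value is matched by the removed arc shifted by $\pi$, on which it has the opposite value. The admissible portions of $\{\cos\phi>0\}$ and $\{\cos\phi<0\}$ therefore have equal length, and the same holds for $\cos(\phi-\gamma)$.

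The only delicate point is Step 1: confirming that the second forbidden interval does not wrap past $2\pi$ or overlap the first, so that the shift by $\pi$ genuinely permutes the forbidden set. That is where the admissibility bounds on $\chi$ are used. The boundary values $\gamma\in\{0,\pi/2,\pi\}$ are either excluded by the definition or, after continuous extension, give empty forbidden intervals, where the result is immediate.
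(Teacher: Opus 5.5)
Your proof is correct and follows the paper's argument: the allowed azimuthal set is invariant under $\phi\mapsto\phi+\pi$ and $A$, $B$ are odd under that shift. You phrase it via the full antipodal map $\boldsymbol{\lambda}\mapsto-\boldsymbol{\lambda}$ on $S^2$ and actually derive the invariance from the admissibility bounds, where the paper only asserts it. One small fix: for $\pi/2<\gamma<\pi$ the bound you need is $\chi>\gamma-\pi/2>0$, so that $(\chi,\pi/2)$ does not extend below $0$; your chain ending in $\chi+\pi>0$ does not by itself give this, though it follows from the same admissibility inequality.
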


\begin{theorem}[General Correlation Function]
\label{thm:general_correlation}

For every admissible boundary function $\chi(\gamma)$ satisfying
Definition~\ref{def:admissible_chi}, the correlation function

\[
E(\gamma) = \langle AB\rangle
\]

admits the exact representation

\begin{equation}
\label{eq:general_correlation} E(\gamma) =
\begin{cases}
\displaystyle \frac{4\gamma - \pi - 2\chi(\gamma)}{3\pi - 2\chi(\gamma)}, & 0<\gamma<\frac{\pi}{2},\\[8mm]
\displaystyle \frac{4\gamma - \pi - 2\chi(\gamma)}{\pi +
2\chi(\gamma)}, & \frac{\pi}{2}<\gamma<\pi,
\end{cases}
\end{equation}

where the right-hand side depends only on the admissible boundary
function $\chi(\gamma)$.

\end{theorem}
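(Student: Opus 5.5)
The plan is to reduce everything to a one-dimensional computation in the azimuthal angle $\phi$, and then get $E(\gamma)$ by subtracting the contribution of the forbidden sectors from the unrestricted integral over the full circle. Both the indicator $\mathcal I_\gamma$ and the product $A B$ in (\ref{eq:AB_definition}) depend only on $\phi$. So, with $d\lambda=\sin\theta\,d\theta\,d\phi$, the $\theta$-integral factors out as $\int_0^\pi\sin\theta\,d\theta=2$ in both numerator and denominator and cancels. Hence
\[
E(\gamma)=\frac{\int_0^{2\pi}\mathcal I_\gamma(\phi)\,\bigl(-\operatorname{sign}[\cos\phi\cos(\phi-\gamma)]\bigr)\,d\phi}{\int_0^{2\pi}\mathcal I_\gamma(\phi)\,d\phi},
\]
and $\Omega(\gamma)=2\bigl(2\pi-2|\chi(\gamma)-\tfrac{\pi}{2}|\bigr)$, since each of the two antipodal forbidden sectors has length $|\chi-\pi/2|$.

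Next, I will determine the sign pattern of $AB$ on the full circle. We have $AB=+1$ exactly where $\cos\phi$ and $\cos(\phi-\gamma)$ have opposite signs.
\begin{itemize}
\item For $0<\gamma<\pi/2$ this happens on $(\pi/2,\pi/2+\gamma)\cup(3\pi/2,3\pi/2+\gamma)$. This set has total length $2\gamma$, so the unrestricted integral is $2\gamma-(2\pi-2\gamma)=4\gamma-2\pi$.
\item For $\pi/2<\gamma<\pi$, intersecting $\{\cos\phi>0\}$ with $\{\cos(\phi-\gamma)<0\}$ and $\{\cos\phi<0\}$ with $\{\cos(\phi-\gamma)>0\}$ gives $AB=+1$ on $(-\pi/2,\gamma-\pi/2)\cup(\pi/2,\gamma+\pi/2)$ (mod $2\pi$). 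The total length is again $2\gamma$, so the unrestricted integral is once more $4\gamma-2\pi$.
\end{itemize}

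The key step, and the only place Definition~\ref{def:admissible_chi} enters, is to check that each forbidden sector lies entirely inside a region where $AB$ is constant.
\begin{itemize}
\item For $\gamma<\pi/2$, the condition $\pi/2<\chi<\pi/2+\gamma$ places $(\pi/2,\chi)$ inside $(\pi/2,\pi/2+\gamma)$, where $AB=+1$. Its antipode $(3\pi/2,\chi+\pi)$ lies inside $(3\pi/2,3\pi/2+\gamma)$, also with $AB=+1$. Removing them subtracts $2(\chi-\pi/2)$, so the numerator is $4\gamma-\pi-2\chi$ and the denominator is $2\pi-2(\chi-\pi/2)=3\pi-2\chi$.
\item For $\gamma>\pi/2$, the condition $\gamma-\pi/2<\chi<\pi/2$ places $(\chi,\pi/2)$ inside $(\gamma-\pi/2,\pi/2)$, where $AB=-1$, and likewise for its antipode. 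Removing them adds $2(\pi/2-\chi)$, so the numerator is $4\gamma-\pi-2\chi$ and the denominator is $2\pi-(\pi-2\chi)=\pi+2\chi$.
\end{itemize}
Dividing numerator by denominator gives (\ref{eq:general_correlation}) in each case.

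I expect the main obstacle to be this containment step: it requires careful bookkeeping of the sign regions of $\cos\phi\cos(\phi-\gamma)$ modulo $2\pi$, especially in the second case where one interval wraps around $\phi=0$. I will handle it by working on $(-\pi/2,3\pi/2)$ instead of $[0,2\pi)$, which avoids the wrap-around. Antipodal symmetry ($AB$ is invariant under $\phi\mapsto\phi+\pi$) then means only the first forbidden sector needs to be checked.
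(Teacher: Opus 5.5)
Your proposal is correct and follows essentially the same route as the paper. Both arguments reduce to the azimuthal integral, compute the unrestricted full-circle integral of $AB$ as $4\gamma-2\pi$ (the paper writes this as $-(2\pi-4\gamma)$), and subtract the two antipodal forbidden sectors, on which $AB$ is constant ($+1$ for $\gamma<\pi/2$, $-1$ for $\gamma>\pi/2$). You also verify explicitly, via the admissibility inequalities, that each forbidden sector lies inside a constant-sign region, which the paper only asserts, so your version is slightly more complete.
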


The proof of Theorem~\ref{thm:vanishing_marginals} relies only on
the antipodal symmetry of the admissible regions, whereas
Theorem~\ref{thm:general_correlation} requires the explicit
evaluation of the corresponding hidden-variable integrals.

The detailed proofs of these theorems are presented in
Appendices~\ref{subsec:antipodal} and
\ref{subsec:proof_general_correlation}.

Theorem~\ref{thm:vanishing_marginals} shows that the geometric
restrictions introduced in Section~\ref{sec:geometry} do not
generate any local statistical bias. From the perspective of
either observer, the measurement outcomes remain completely
random, with the values $+1$ and $-1$ occurring with equal
probability.

Consequently, Alice and Bob cannot infer the presence of the
underlying geometric restrictions from their local data alone. Any
information about the admissible boundary function $\chi(\gamma)$
is hidden at the level of the marginal distributions.

Theorem~\ref{thm:general_correlation} reveals where the geometric
structure becomes observable. Although all admissible models share
identical local statistics, they generally differ in their
correlation function $E(\gamma)$. The entire family of admissible
correlations is therefore encoded in the boundary function
$\chi(\gamma)$.

In this sense, the admissible boundary function plays the role of
a geometric parameter that determines how the hidden-variable
restrictions manifest themselves in the observable correlations
while leaving the local marginals unchanged.

The remainder of this work is devoted to the study of the
geometric properties of this family and to the identification of
particular boundary functions that reproduce physically
distinguished correlations, including the quantum singlet case.

\section{Reflection-Symmetric Admissible Family}
\label{sec:symmetric_family}

The general admissible family introduced in
Section~\ref{sec:geometry} contains a large variety of correlation
models.

However, several physically relevant examples, including the
quantum singlet solution and the quadratic model introduced later,
possess an additional reflection symmetry.

This observation motivates the introduction of a smaller function
space on which stronger structural results can be established.

\begin{definition}[Reflection-Symmetric Admissible Family]
\label{def:symmetric_family}

Let

\[
\mathcal{R}_{\mathrm{adm}}
\]

denote the set of continuously differentiable admissible boundary
functions satisfying the reflection symmetry

\begin{equation}
\chi(\pi-\gamma) = \pi-\chi(\gamma), \qquad 0\le\gamma\le\pi.
\label{eq:reflection_symmetry}
\end{equation}

Explicitly,

\[
\mathcal{R}_{\mathrm{adm}} = \left\{ \chi\in C^{1}([0,\pi]) \;
\Big| \; \chi \text{ is admissible and satisfies }
\eqref{eq:reflection_symmetry} \right\}.
\]

\end{definition}

Eq.~\eqref{eq:reflection_symmetry} implies that the graph of
$\chi$ is invariant under reflection about the point $\left(
\frac{\pi}{2}, \frac{\pi}{2} \right)$. Consequently, the values of
$\chi$ on the interval $\left(0,\frac{\pi}{2}\right)$  completely
determine the function on the interval
$\left(\frac{\pi}{2},\pi\right)$.

The reflection symmetry therefore reduces the effective number of
independent degrees of freedom needed to specify an admissible
model.

The quantum singlet solution belongs to
$\mathcal{R}_{\mathrm{adm}}$, as does the quadratic model
introduced later in this work.

The importance of the space $\mathcal{R}_{\mathrm{adm}}$ becomes
apparent in the analysis of the CHSH function, where the
reflection symmetry leads to nontrivial universal results that are
independent of the particular admissible boundary function.

The abstract definition of the reflection-symmetric admissible
family becomes more transparent when particular representatives
are considered explicitly.

In the following sections we examine several distinguished members
of $\mathcal{R}_{\mathrm{adm}}$. These examples illustrate the
geometric diversity of the family and provide useful benchmarks
for understanding the correlation structures generated by
different admissible boundary functions.

We begin with the simplest possible element, namely the constant
boundary function $\chi(\gamma)=\pi/2$, which corresponds to the
unrestricted hidden-variable model. Subsequently, we introduce the
quadratic model, the simplest nontrivial member of the family, and
finally the quantum singlet solution, which reproduces the quantum
correlation $E(\gamma)=-\cos\gamma$.

\subsection{The Constant Model}
\label{subsec:constant_model}

Before introducing more elaborate members of the
reflection-symmetric admissible family
$\mathcal{R}_{\mathrm{adm}}$, it is instructive to consider its
simplest element.

\begin{proposition}[Constant Model]
\label{prop:constant_model}

Let

\begin{equation}
\chi(\gamma)=\frac{\pi}{2}, \qquad 0<\gamma<\pi.
\label{eq:constant_model}
\end{equation}

Then the admissible region coincides with the entire
hidden-variable sphere,

\begin{equation}
\Gamma_\gamma = S^2, \qquad 0<\gamma<\pi,
\label{eq:constant_model_gamma}
\end{equation}

and the corresponding correlation function is

\begin{equation}
E(\gamma) = \frac{2\gamma}{\pi}-1.
\label{eq:constant_model_correlation}
\end{equation}

\end{proposition}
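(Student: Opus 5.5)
Plan: the constant model is the degenerate case $\chi\equiv\pi/2$, so the proof has two parts, each almost immediate. First, for the admissible region, substitute $\chi(\gamma)=\pi/2$ into Definition~\ref{def:Gamma}. In both branches the forbidden intervals become $(\pi/2,\pi/2)\cup(3\pi/2,3\pi/2)$, which are empty. Hence $\Gamma_\gamma=S^2$, so $\mathcal I_\gamma\equiv 1$. Consistently, the prefactor $\operatorname{sign}(\pi/2-\chi)$ in \eqref{eq:admissibility_representation} vanishes. Then $\Omega(\gamma)=4\pi$ and $\rho_\gamma$ is the uniform measure on the sphere.

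One caveat: $\chi\equiv\pi/2$ sits on the boundary of the strict inequalities in Definition~\ref{def:admissible_chi}. I therefore would not invoke Theorem~\ref{thm:general_correlation} blindly. Instead I would evaluate $E(\gamma)$ directly from \eqref{eq:AB_definition} with the uniform distribution. Since the product depends only on $\phi$, the $\theta$-integral factors out, giving
\[
E(\gamma)=-\frac{1}{2\pi}\int_0^{2\pi}\operatorname{sign}\!\left[\cos\phi\,\cos(\phi-\gamma)\right]d\phi .
\]
The product $\cos\phi\cos(\phi-\gamma)$ is negative exactly when $\phi$ lies between a zero of $\cos\phi$ and the nearest zero of $\cos(\phi-\gamma)$. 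The zeros are $\pi/2$, $3\pi/2$ and $\pi/2+\gamma$, $3\pi/2+\gamma$. So the negative set consists of two arcs, each of length $\gamma$, for $0<\gamma<\pi$. The average of the sign is therefore $\bigl((2\pi-2\gamma)-2\gamma\bigr)/(2\pi)=1-2\gamma/\pi$, and hence $E(\gamma)=2\gamma/\pi-1$.

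As a cross-check, I would also take the limit $\chi\to\pi/2$ in \eqref{eq:general_correlation}. The first branch gives $(4\gamma-2\pi)/(2\pi)=2\gamma/\pi-1$, and the second branch gives $(4\gamma-2\pi)/(2\pi)$, the same value. This agrees with the direct computation and shows that the formula extends continuously to this boundary member.

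The only mildly delicate step is correctly identifying the sign-change arcs and checking that they are valid for all $\gamma\in(0,\pi)$ without splitting at $\pi/2$. This is clear because $\pi/2+\gamma<3\pi/2$ throughout that range, so the two arcs never overlap.
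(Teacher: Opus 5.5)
Your proof is correct. The first step is exactly the paper's: both forbidden sectors degenerate to the empty open intervals $(\pi/2,\pi/2)$ and $(3\pi/2,3\pi/2)$, so $\Gamma_\gamma=S^2$.

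For the correlation the routes differ. The paper substitutes $\chi=\pi/2$ into the two branches of Theorem~\ref{thm:general_correlation} and gets $(4\gamma-2\pi)/(2\pi)$ in each. You instead average $-\operatorname{sign}[\cos\phi\cos(\phi-\gamma)]$ directly over the uniform distribution, and use Theorem~\ref{thm:general_correlation} only as a cross-check.

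Your caveat is well founded. $\chi\equiv\pi/2$ violates the strict inequalities of Definition~\ref{def:admissible_chi}, so strictly speaking it is not in $\mathcal R_{\mathrm{adm}}$ either, and the paper applies the theorem outside its stated hypotheses. This is harmless only because the appendix argument goes through verbatim when the excluded intervals are empty.

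Your direct computation is in effect a proof of the full-circle identity $\int_0^{2\pi}\operatorname{sign}[\cos\phi\cos(\phi-\gamma)]\,d\phi=2\pi-4\gamma$, which the appendix asserts without derivation. That makes your argument self-contained. Unlike the paper's branchwise substitution, it also covers $\gamma=\pi/2$, where Definition~\ref{def:Gamma} and Theorem~\ref{thm:general_correlation} say nothing. The paper's route is shorter and emphasizes that the constant model is the undeformed member of the family.

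One wording fix. The negative set is $(\pi/2,\pi/2+\gamma)\cup(3\pi/2,3\pi/2+\gamma)$ (mod $2\pi$). Each arc runs from a zero of $\cos\phi$ to the \emph{next} zero of $\cos(\phi-\gamma)$ in the increasing direction, not the nearest one. For $\gamma>\pi/2$ the zero of $\cos(\phi-\gamma)$ nearest to $\pi/2$ is $\gamma-\pi/2$, and the arc $(\gamma-\pi/2,\pi/2)$ it bounds is a \emph{positive} arc. The lengths $\gamma$ you actually use are the correct ones, so the result is unaffected.
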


\begin{proof}

Substituting Eq.~\eqref{eq:constant_model} into the definition of
the admissible region, Eq.~\eqref{eq:Gamma_def}, shows that the
two excluded azimuthal sectors collapse to intervals of zero
measure. Consequently,

\[
\Gamma_\gamma=S^2,
\]

and no hidden-variable states are removed.

The correlation function is then obtained directly from
Theorem~\ref{thm:general_correlation}. For $0<\gamma<\pi/2$,

\begin{align}
E(\gamma) &= \frac{4\gamma-\pi-\pi}
     {3\pi-\pi}
\nonumber\\
&= \frac{2\gamma}{\pi}-1.
\end{align}

Similarly, for $\pi/2<\gamma\leq\pi$,

\begin{align}
E(\gamma) &= \frac{4\gamma-\pi-\pi}
     {\pi+\pi}
\nonumber\\
&= \frac{2\gamma}{\pi}-1.
\end{align}

Therefore Eq.~\eqref{eq:constant_model_correlation} holds on the
entire interval $0<\gamma<\pi$.

\end{proof}

The constant model represents the undeformed reference point of
the admissible family. Since no hidden-variable states are
excluded, the admissible region coincides with the full sphere and
the resulting correlation function is precisely the linear Bell
correlation,

\[
E(\gamma)=\frac{2\gamma}{\pi}-1.
\]

From this perspective, every nonconstant admissible boundary
function may be regarded as introducing a geometric deformation of
the unrestricted hidden-variable model.

The central idea of the present framework is therefore not the
modification of the local deterministic observables themselves,
which remain fixed throughout the construction, but rather the
geometric restriction of the admissible hidden-variable space.
Different choices of the boundary function $\chi(\gamma)$ produce
different deformations of the baseline correlation
\eqref{eq:constant_model_correlation}, including the quantum
singlet correlation as a particular case.

\subsection{The Quadratic Model}
\label{subsec:quadratic_model}

Having examined the constant model, we now consider the simplest
nontrivial element of the reflection-symmetric admissible family
$\mathcal{R}_{\mathrm{adm}}$.

Since every admissible boundary function must pass through the
three fixed points

\[
(0,\pi/2), \qquad (\pi/2,\pi/2), \qquad (\pi,\pi/2),
\]

the simplest nonconstant polynomial representative is obtained by
choosing a quadratic function satisfying these constraints. This
leads naturally to the boundary function

\begin{equation}
\chi_{\mathrm q}(\gamma) =
\begin{cases}
\displaystyle -\frac{2 \gamma ^2}{\pi }+\gamma +\frac{\pi }{2}, &
0<\gamma<\frac{\pi}{2},
\\[4mm]
\displaystyle \frac{2 \gamma ^2}{\pi }-3 \gamma +\frac{3 \pi }{2},
& \frac{\pi}{2}<\gamma<\pi.
\end{cases}
\label{eq:quadratic_model}
\end{equation}

which will be referred to as the \emph{quadratic model}.

\begin{proposition}[Quadratic Model]
\label{prop:quadratic_model}

The boundary function $\chi_{\mathrm q}(\gamma)$ defined by
Eq.~\eqref{eq:quadratic_model} belongs to the reflection-symmetric
admissible family $\mathcal{R}_{\mathrm{adm}}$.

Furthermore, the corresponding correlation function is

\begin{equation}
E_{\mathrm q}(\gamma) =
\begin{cases}
\displaystyle \frac{2 \gamma
   ^2+\pi  \gamma -\pi ^2}{2 \gamma ^2-\pi  \gamma +\pi ^2}, &
0<\gamma<\frac{\pi}{2},
\\[4mm]
\displaystyle \frac{-2 \gamma
   ^2+5 \pi  \gamma -2 \pi ^2}{2 \gamma ^2-3 \pi  \gamma +2 \pi
   ^2}, & \frac{\pi}{2}<\gamma<\pi.
\end{cases}
\label{eq:quadratic_correlation}
\end{equation}

\end{proposition}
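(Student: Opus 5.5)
The plan has three parts: check that $\chi_{\mathrm q}$ is admissible, check that it is reflection-symmetric and $C^1$, and obtain $E_{\mathrm q}$ by substituting into Theorem~\ref{thm:general_correlation}.

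\emph{Admissibility on $0<\gamma<\pi/2$.} We have $\chi_{\mathrm q}(\gamma)-\pi/2=\gamma(1-2\gamma/\pi)$, which is strictly positive there. We also have $\chi_{\mathrm q}(\gamma)-(\pi/2+\gamma)=-2\gamma^2/\pi<0$. So \eqref{eq:admissible_low} holds.

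\emph{Admissibility on $\pi/2<\gamma<\pi$.} Here $\chi_{\mathrm q}(\gamma)-\pi/2=2\gamma^2/\pi-3\gamma+\pi=(2\gamma-\pi)(\gamma-\pi)/\pi<0$. Also $\chi_{\mathrm q}(\gamma)-(\gamma-\pi/2)=2(\gamma-\pi)^2/\pi>0$. So \eqref{eq:admissible_high} holds.

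\emph{Reflection symmetry.} Take $\gamma\in(0,\pi/2)$, so that $\pi-\gamma\in(\pi/2,\pi)$. Expanding gives
\[
\chi_{\mathrm q}(\pi-\gamma)=\frac{2(\pi-\gamma)^2}{\pi}-3(\pi-\gamma)+\frac{3\pi}{2}=\frac{2\gamma^2}{\pi}-\gamma+\frac{\pi}{2}=\pi-\chi_{\mathrm q}(\gamma),
\]
which is \eqref{eq:reflection_symmetry}. The identity is symmetric in the two branches, so it also holds for $\gamma\in(\pi/2,\pi)$.

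\emph{$C^1$ regularity.} Extend $\chi_{\mathrm q}$ by $\pi/2$ at $0$, $\pi/2$ and $\pi$. Both branches are then polynomials, so the only point needing care is $\gamma=\pi/2$, where the branches meet. Both branches take the value $\pi/2$ there. For the derivatives, $(-4\gamma/\pi+1)|_{\pi/2}=-1$ and $(4\gamma/\pi-3)|_{\pi/2}=-1$, so the one-sided derivatives agree. This matching is the one point I expect to need real checking rather than routine algebra, and it works out. Hence $\chi_{\mathrm q}\in\mathcal R_{\mathrm{adm}}$.

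\emph{Correlation function.} We substitute into Theorem~\ref{thm:general_correlation}. For $0<\gamma<\pi/2$,
\[
4\gamma-\pi-2\chi_{\mathrm q}=\frac{4\gamma^2}{\pi}+2\gamma-2\pi,\qquad 3\pi-2\chi_{\mathrm q}=\frac{4\gamma^2}{\pi}-2\gamma+2\pi.
\]
Multiplying numerator and denominator by $\pi/2$ gives the first line of \eqref{eq:quadratic_correlation}.

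For $\pi/2<\gamma<\pi$,
\[
4\gamma-\pi-2\chi_{\mathrm q}=-\frac{4\gamma^2}{\pi}+10\gamma-4\pi,\qquad \pi+2\chi_{\mathrm q}=\frac{4\gamma^2}{\pi}-6\gamma+4\pi.
\]
Multiplying again by $\pi/2$ yields the second line of \eqref{eq:quadratic_correlation}.

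Finally, both denominators are positive on their intervals, since their discriminants are negative. So no singularity occurs, and the expressions are valid throughout each interval.
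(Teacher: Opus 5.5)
Your proof is correct and follows the same route as the paper: you verify the admissibility inequalities, check the reflection identity by expanding $\chi_{\mathrm q}(\pi-\gamma)$, and substitute into Theorem~\ref{thm:general_correlation}. It is actually more complete than the paper's, which asserts admissibility only ``by direct inspection'' and never checks the $C^1$ matching at $\gamma=\pi/2$ that membership in $\mathcal R_{\mathrm{adm}}$ requires; your explicit factorizations and the one-sided derivative computation fill exactly those gaps.
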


\begin{proof}

The function $\chi_{\mathrm q}$ is continuous on $[0,\pi]$ and
satisfies

\[
\chi_{\mathrm q}(0) = \chi_{\mathrm q}\!\left(\frac{\pi}{2}\right)
= \chi_{\mathrm q}(\pi) = \frac{\pi}{2}.
\]

A direct inspection of Eq.~\eqref{eq:quadratic_model} shows that

\[
\frac{\pi}{2} < \chi_{\mathrm q}(\gamma) < \frac{\pi}{2}+\gamma,
\qquad 0<\gamma<\frac{\pi}{2},
\]

and

\[
\gamma-\frac{\pi}{2} < \chi_{\mathrm q}(\gamma) < \frac{\pi}{2},
\qquad \frac{\pi}{2}<\gamma\leq\pi.
\]

Hence $\chi_{\mathrm q}$ satisfies the admissibility conditions.

To verify the reflection symmetry, let $0<\gamma<\pi/2$. Then

\begin{align}
\chi_{\mathrm q}(\pi-\gamma) &= \frac{2(\pi-\gamma)^2}{\pi} -
3(\pi-\gamma) + \frac{3\pi}{2}
\nonumber\\
&= \frac{2\gamma^2}{\pi} - \gamma + \frac{\pi}{2}
\nonumber\\
&= \pi-\chi_{\mathrm q}(\gamma).
\end{align}

Therefore,

\[
\chi_{\mathrm q}(\pi-\gamma) = \pi-\chi_{\mathrm q}(\gamma),
\]

and consequently

\[
\chi_{\mathrm q} \in \mathcal R_{\mathrm{adm}}.
\]

Substituting Eq.~\eqref{eq:quadratic_model} into the general
correlation formula of Theorem~\ref{thm:general_correlation}
yields the result given in Eq.~\eqref{eq:quadratic_correlation}.

\end{proof}

The quadratic model occupies a distinguished position within the
reflection-symmetric admissible family. It is the simplest
nonconstant polynomial boundary function compatible with all
admissibility and symmetry requirements.

We now turn to a distinguished member of
$\mathcal{R}_{\mathrm{adm}}$, namely the quantum singlet solution,
whose associated correlation function reproduces exactly the
quantum prediction $E(\gamma)=-\cos\gamma$.

\subsection{The Quantum Singlet Solution}
\label{subsec:singletquantum}

The examples considered thus far illustrate how different
admissible boundary functions generate different correlation
models within the reflection-symmetric admissible family
$\mathcal{R}_{\mathrm{adm}}$.

A natural question therefore arises: does this family contain a
boundary function capable of reproducing the quantum singlet
correlation

\[
E(\gamma)=-\cos\gamma?
\]

The answer is affirmative.

The following proposition identifies a distinguished member of
$\mathcal{R}_{\mathrm{adm}}$ whose associated correlation function
coincides exactly with the quantum prediction for the singlet
state.

\begin{proposition}[Quantum Singlet Solution]
\label{prop:singlet_solution}

Define

\begin{equation}
\chi_{\mathrm s}(\gamma) =
\begin{cases}
\displaystyle \frac{3\pi}{2} + (\gamma-\pi)
\sec^{2}\!\left(\frac{\gamma}{2}\right), & 0<\gamma<\frac{\pi}{2},
\\[4mm]
\displaystyle -\frac{\pi}{2} + \gamma
\csc^{2}\!\left(\frac{\gamma}{2}\right), &
\frac{\pi}{2}<\gamma<\pi.
\end{cases}
\label{eq:chi_singlet}
\end{equation}

Then

\[
\chi_{\mathrm s} \in \mathcal R_{\mathrm{adm}},
\]

and the corresponding correlation function is

\begin{equation}
E_{\mathrm s}(\gamma) = -\cos\gamma.
\label{eq:singlet_correlation}
\end{equation}

\end{proposition}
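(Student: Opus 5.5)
The natural route is to derive $\chi_{\mathrm s}$ by inverting the general correlation formula of Theorem~\ref{thm:general_correlation} with $E=-\cos\gamma$, and then to verify admissibility, reflection symmetry and $C^1$ regularity separately.

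\textbf{Correlation.} For $0<\gamma<\pi/2$, setting $(4\gamma-\pi-2\chi)/(3\pi-2\chi)=-\cos\gamma$ is linear in $\chi$ and gives
\[
\chi=\frac{4\gamma-\pi+3\pi\cos\gamma}{2(1+\cos\gamma)}.
\]
With $1+\cos\gamma=2\cos^2(\gamma/2)$ this should reduce to $\frac{3\pi}{2}+(\gamma-\pi)\sec^2(\gamma/2)$. For $\pi/2<\gamma<\pi$, the relation $(4\gamma-\pi-2\chi)/(\pi+2\chi)=-\cos\gamma$ gives
\[
\chi=\frac{4\gamma-\pi+\pi\cos\gamma}{2(1-\cos\gamma)}=-\frac{\pi}{2}+\gamma\csc^2(\gamma/2).
\]
Reading these steps backwards shows $E_{\mathrm s}=-\cos\gamma$, provided the denominators $3\pi-2\chi$ and $\pi+2\chi$ do not vanish. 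Admissibility guarantees this, since it gives $\chi<\pi$ on the first branch and $\chi>0$ on the second.

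\textbf{Reflection symmetry.} For $0<\gamma<\pi/2$, substitute $\pi-\gamma$ into the second branch. Using $\csc^2((\pi-\gamma)/2)=\sec^2(\gamma/2)$,
\[
\chi_{\mathrm s}(\pi-\gamma)=-\frac{\pi}{2}+(\pi-\gamma)\sec^2(\gamma/2)=\pi-\chi_{\mathrm s}(\gamma).
\]
Because of this symmetry, admissibility on the first interval implies it on the second. Indeed, $\gamma-\pi/2<\chi(\gamma)<\pi/2$ for $\gamma>\pi/2$ maps exactly onto $\pi/2<\chi(\gamma')<\pi/2+\gamma'$ with $\gamma'=\pi-\gamma$.

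\textbf{Admissibility on $(0,\pi/2)$.} Set $s=\sec^2(\gamma/2)\in(1,2)$.
\begin{itemize}
\item The lower bound $\chi>\pi/2$ becomes $\pi>(\pi-\gamma)s$, i.e.\ $\cos^2(\gamma/2)>1-\gamma/\pi$. This is $\frac{1+\cos\gamma}{2}>1-\frac{\gamma}{\pi}$, i.e.\ $\cos\gamma>1-\frac{2\gamma}{\pi}$. That holds strictly on $(0,\pi/2)$ by concavity of $\cos$ on $[0,\pi/2]$: the chord joining $(0,1)$ and $(\pi/2,0)$ lies strictly below the graph.
\item The upper bound $\chi<\pi/2+\gamma$ becomes $(\pi-\gamma)(s-1)<\gamma$, i.e.\ $(\pi-\gamma)\tan^2(\gamma/2)<\gamma$. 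This is equivalent to $(\pi-\gamma)(1-\cos\gamma)<\gamma(1+\cos\gamma)$, i.e.\ $\pi(1-\cos\gamma)<2\gamma$, i.e.\ again $\cos\gamma>1-2\gamma/\pi$. Both bounds therefore reduce to the same concavity inequality.
\end{itemize}

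\textbf{Regularity.} The step most likely to be the real obstacle is the $C^1([0,\pi])$ requirement in $\mathcal R_{\mathrm{adm}}$, in particular at the junction $\gamma=\pi/2$ and the endpoints.
\begin{itemize}
\item Both branches give $\pi/2$ at $\gamma=\pi/2$, since $\sec^2(\pi/4)=\csc^2(\pi/4)=2$.
\item For the derivatives, the first branch has $\chi'=s+(\gamma-\pi)s\tan(\gamma/2)$, which at $\pi/2$ equals $2-\pi$. The second has $\chi'=\csc^2(\gamma/2)-\gamma\csc^2(\gamma/2)\cot(\gamma/2)$, giving $2-\pi$ as well, so $\chi_{\mathrm s}$ is $C^1$ at the junction.
\item At $\gamma=0$ the first branch gives $3\pi/2-\pi=\pi/2$ with finite derivative $1$. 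By the reflection symmetry, the second branch likewise extends $C^1$ to $\gamma=\pi$.
\end{itemize}
Taken together, these steps show $\chi_{\mathrm s}\in\mathcal R_{\mathrm{adm}}$ and, via Theorem~\ref{thm:general_correlation}, $E_{\mathrm s}(\gamma)=-\cos\gamma$.
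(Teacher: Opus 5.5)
Your overall route matches the paper's: verify the reflection symmetry via $\csc^2\bigl(\tfrac{\pi-\gamma}{2}\bigr)=\sec^2\bigl(\tfrac{\gamma}{2}\bigr)$, check admissibility, and substitute into Theorem~\ref{thm:general_correlation}. You carry it out in more detail than the paper's ``direct inspection.'' The following parts of your proposal are correct:
the inversion of the correlation formula on both branches;
the transfer of admissibility from $(0,\pi/2)$ to $(\pi/2,\pi)$ by the symmetry;
the concavity argument for $\cos\gamma>1-2\gamma/\pi$;
the $C^1$ check at $0$, $\pi/2$, $\pi$, where you correctly find $\chi_{\mathrm s}'(\pi/2)=2-\pi$ from both sides and $\chi_{\mathrm s}'(0)=1$.
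The $C^1$ check is something the paper's proof omits entirely, even though membership in $\mathcal R_{\mathrm{adm}}\subset C^1([0,\pi])$ requires it.

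There is, however, an algebra error in the upper admissibility bound. Write $\chi_{\mathrm s}(\gamma)=\tfrac{3\pi}{2}-(\pi-\gamma)s$ with $s=\sec^2(\gamma/2)$. Then $\chi_{\mathrm s}<\tfrac{\pi}{2}+\gamma$ is equivalent to $\pi-\gamma<(\pi-\gamma)s$, i.e.\ $(\pi-\gamma)(s-1)>0$. It is not equivalent to $(\pi-\gamma)(s-1)<\gamma$. The inequality you wrote is just the lower bound $\chi_{\mathrm s}>\pi/2$ rewritten, since $\chi_{\mathrm s}-\tfrac{\pi}{2}=\gamma-(\pi-\gamma)(s-1)$. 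So you proved the lower bound twice and the upper bound not at all; your remark that ``both bounds reduce to the same concavity inequality'' is a symptom of this.

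The omission propagates in two places. First, your symmetry transfer sends the upper bound on $(0,\pi/2)$ to the lower bound $\chi_{\mathrm s}(\gamma)>\gamma-\pi/2$ on $(\pi/2,\pi)$. Second, your nonvanishing-denominator argument ($\chi<\pi$, resp.\ $\chi>0$) relies on it as well. The repair is one line: on $(0,\pi/2)$ you have $s>1$, so $(\pi-\gamma)(s-1)=(\pi-\gamma)\tan^2(\gamma/2)>0$ and the upper bound holds trivially. Only the lower bound actually needs the concavity inequality.
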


\begin{proof}

First, we verify the reflection symmetry.

For $0<\gamma<\pi/2$,

\begin{align}
\chi_{\mathrm s}(\pi-\gamma) &= -\frac{\pi}{2} + (\pi-\gamma)
\csc^{2} \!\left( \frac{\pi-\gamma}{2} \right)
\nonumber\\
&= -\frac{\pi}{2} + (\pi-\gamma) \sec^{2} \!\left(
\frac{\gamma}{2} \right)
\nonumber\\
&= \pi-\chi_{\mathrm s}(\gamma).
\end{align}

Hence

\[
\chi_{\mathrm s}(\pi-\gamma) = \pi-\chi_{\mathrm s}(\gamma).
\]

A direct inspection of Eq.~\eqref{eq:chi_singlet} shows that the
admissibility inequalities of Definition~\ref{def:admissible_chi}
are satisfied on both intervals $(0,\pi/2)$ and $(\pi/2,\pi)$.

Therefore

\[
\chi_{\mathrm s} \in \mathcal R_{\mathrm{adm}}.
\]

Substituting Eq.~\eqref{eq:chi_singlet} into the general
correlation formula of Theorem~\ref{thm:general_correlation} gives

\[
E_{\mathrm s}(\gamma) = -\cos\gamma,
\]

for both branches of the solution.

Therefore Eq.~\eqref{eq:singlet_correlation} holds on the entire
interval $0<\gamma<\pi$.

\end{proof}

The importance of Proposition~\ref{prop:singlet_solution} is that
the quantum singlet correlation emerges as a particular element of
the geometric family $\mathcal R_{\mathrm{adm}}$.

From the present viewpoint, the problem is therefore shifted from
reproducing the quantum correlation to understanding what
geometric principle, if any, singles out the function
$\chi_{\mathrm s}$ among the infinitely many admissible boundary
functions contained in $\mathcal R_{\mathrm{adm}}$.

\section{CHSH Structure of the Reflection-Symmetric Family}
\label{sec:chsh_structure}

The examples discussed in the previous section suggest that the
reflection-symmetric admissible family $\mathcal R_{\mathrm{adm}}$
contains a rich variety of correlation models.

This family includes the constant model, the quadratic model, and
the quantum singlet solution as particular members.

We now investigate structural properties of the associated CHSH
function that hold throughout $\mathcal R_{\mathrm{adm}}$
independently of the detailed form of the boundary function.

The results obtained below reveal universal features of the
admissible correlation space and provide further evidence that the
reflection symmetry

\[
\chi(\pi-\gamma)=\pi-\chi(\gamma)
\]

plays a distinguished role in the construction.

\subsection{The CHSH Function}

A natural way to compare different correlation models is through
their degree of violation of the Clauser--Horne--Shimony--Holt
(CHSH) inequality.

For local hidden-variable theories, the CHSH combination

\begin{equation}
\mathcal S = E(\mathbf a,\mathbf b) - E(\mathbf a,\mathbf b') +
E(\mathbf a',\mathbf b) + E(\mathbf a',\mathbf b')
\label{eq:CHSH_definition}
\end{equation}

satisfies

\begin{equation}
|\mathcal S|\le 2. \label{eq:CHSH_inequality}
\end{equation}

Since the correlation models considered in this work depend only
on the angular separation between measurement directions, it is
convenient to specialize to the standard coplanar CHSH geometry
for which the four relative angles are

\[
\nu,\qquad \nu,\qquad \nu,\qquad 3\nu.
\]

In this configuration, Eq.~\eqref{eq:CHSH_definition} reduces to

\[
S(\nu)=3E(\nu)-E(3\nu),
\]

where \(E(\gamma)\) denotes the correlation function associated
with a given admissible boundary function.

For the quantum singlet correlation

\[
E_{\mathrm s}(\gamma)=-\cos\gamma,
\]

one obtains

\begin{equation}
S_{\mathrm s}(\nu) = -3\cos\nu+\cos(3\nu). \label{eq:Sv_singlet}
\end{equation}

The function \(S_{\mathrm s}(\nu)\) possesses a stationary point
at

\begin{equation}
\nu=\frac{\pi}{4}, \label{eq:singlet_stationary}
\end{equation}

where it attains the Tsirelson value

\begin{equation}
|S_{\mathrm s}(\pi/4)| = 2\sqrt2. \label{eq:Tsirelson_bound}
\end{equation}

The appearance of the stationary point at \(\nu=\pi/4\) is not
merely a feature of the quantum singlet solution. As we shall show
below, this point exhibits a remarkable degree of universality
throughout the family \(\mathcal R_{\mathrm{adm}}\).

The observation above motivates the study of the CHSH function
\(S(\nu)\) throughout the family \(\mathcal R_{\mathrm{adm}}\),
with the goal of identifying structural properties that are
independent of the particular choice of admissible boundary
function.

\begin{definition}[CHSH Function]
\label{def:CHSH_function}

For every admissible correlation function \(E(\gamma)\), define

\begin{equation}
S(\nu) = 3E(\nu)-E(3\nu), \label{eq:Sv_definition}
\end{equation}

with

\begin{equation}
\frac{\pi}{6} < \nu < \frac{\pi}{3}. \label{eq:nu_domain}
\end{equation}

\end{definition}

The interval in Eq.~\eqref{eq:nu_domain} is chosen so that

\[
0<\nu<\frac{\pi}{2}, \qquad \frac{\pi}{2}<3\nu<\pi.
\]

Consequently, the first term in Eq.~\eqref{eq:Sv_definition} is
evaluated on the left branch of the correlation function, whereas
the second term is evaluated on the right branch.

For boundary functions belonging to the reflection-symmetric
family \(\mathcal R_{\mathrm{adm}}\), the relation

\[
\chi(\pi-\gamma) = \pi-\chi(\gamma)
\]

implies that the right branch is completely determined by the left
branch. Therefore, the analysis of \(S(\nu)\) requires knowledge
of \(\chi\) only on the interval

\[
0<\gamma<\frac{\pi}{2}.
\]

This observation will play a central role in the structural
results established below.

\begin{proposition}[CHSH Function in Terms of the Boundary Function]
\label{prop:CHSH_chi}

Let
\[
\chi \in \mathcal R_{\mathrm{adm}}.
\]

Then, for

\[
\frac{\pi}{6}<\nu<\frac{\pi}{3},
\]

the CHSH function

\[
S(\nu)=3E(\nu)-E(3\nu)
\]

admits the representation

\begin{equation}
S(\nu) = -\frac{ 8\Bigl[ \chi(\pi-3\nu)\bigl(3\nu-2\chi(\nu)\bigr)
+ 3(\pi-\nu)\chi(\nu) \Bigr] }{ \bigl(3\pi-2\chi(\pi-3\nu)\bigr)
\bigl(3\pi-2\chi(\nu)\bigr) }. \label{eq:S_chi_representation}
\end{equation}

\end{proposition}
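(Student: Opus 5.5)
The plan is a direct substitution into Theorem~\ref{thm:general_correlation}, followed by using the reflection symmetry to rewrite the right-branch term in terms of $\chi(\pi-3\nu)$, and then putting everything over a common denominator.

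\textbf{Step 1 (branches).} For $\pi/6<\nu<\pi/3$ we have $0<\nu<\pi/2$ and $\pi/2<3\nu<\pi$. So $E(\nu)$ is taken from the left branch of Eq.~\eqref{eq:general_correlation} and $E(3\nu)$ from the right branch. Abbreviate $x=\chi(\nu)$ and $c=\chi(\pi-3\nu)$; note $\pi-3\nu\in(0,\pi/2)$. Then
\[
E(\nu)=\frac{4\nu-\pi-2x}{3\pi-2x},\qquad
E(3\nu)=\frac{12\nu-\pi-2\chi(3\nu)}{\pi+2\chi(3\nu)}.
\]

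\textbf{Step 2 (reflection).} Since $\chi\in\mathcal R_{\mathrm{adm}}$, Eq.~\eqref{eq:reflection_symmetry} with $\gamma=3\nu$ gives $\chi(3\nu)=\pi-c$. Substituting this, the denominator of $E(3\nu)$ becomes $3\pi-2c$ and its numerator becomes $12\nu-3\pi+2c$. Both terms of $S$ are now written through values of $\chi$ on $(0,\pi/2)$. The denominators $3\pi-2x$ and $3\pi-2c$ are positive by admissibility, since $\chi<\pi$ there.

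\textbf{Step 3 (common denominator).} Write
\[
S(\nu)=\frac{3(4\nu-\pi-2x)(3\pi-2c)-(12\nu-3\pi+2c)(3\pi-2x)}{(3\pi-2c)(3\pi-2x)}.
\]
Expanding the numerator, the terms $36\pi\nu$, $-9\pi^2$ and $\pm6\pi c$ cancel between the two products. What remains is
\[
-24\nu c-24\pi x+24\nu x+16xc
=-8\bigl[c(3\nu-2x)+3(\pi-\nu)x\bigr].
\]
This is exactly Eq.~\eqref{eq:S_chi_representation}.

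The only step requiring care is this bookkeeping of the expansion and the sign of the final factorization. The conceptual input is just the use of reflection symmetry in Step~2, so I do not expect a genuine obstacle.
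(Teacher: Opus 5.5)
Your proposal is correct and matches the paper's proof: take $E(\nu)$ from the left branch and $E(3\nu)$ from the right branch, use $\chi(3\nu)=\pi-\chi(\pi-3\nu)$, and combine over the common denominator $(3\pi-2\chi(\pi-3\nu))(3\pi-2\chi(\nu))$. Your explicit expansion of the numerator, which the paper summarizes as ``simplifying,'' checks out, as does your factorization $-8\bigl[c(3\nu-2x)+3(\pi-\nu)x\bigr]$.
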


\begin{proof}

Since

\[
\frac{\pi}{6}<\nu<\frac{\pi}{3},
\]

we have

\[
0<\nu<\frac{\pi}{2}, \qquad \frac{\pi}{2}<3\nu<\pi.
\]

Therefore, using Theorem~\ref{thm:general_correlation},

\begin{equation}
E(\nu) = \frac{ 4\nu-\pi-2\chi(\nu) }{ 3\pi-2\chi(\nu) },
\label{eq:E_nu_branch}
\end{equation}

and

\begin{equation}
E(3\nu) = \frac{ 12\nu-\pi-2\chi(3\nu) }{ \pi+2\chi(3\nu) }.
\label{eq:E_3nu_branch}
\end{equation}

Since \( \chi\in\mathcal R_{\mathrm{adm}} \), the reflection
symmetry implies

\[
\chi(3\nu) = \pi-\chi(\pi-3\nu).
\]

Substituting this relation into Eq.~\eqref{eq:E_3nu_branch} gives

\[
E(3\nu) = \frac{ 12\nu-3\pi+2\chi(\pi-3\nu) }{
3\pi-2\chi(\pi-3\nu) }.
\]

Finally, inserting the above expressions into

\[
S(\nu)=3E(\nu)-E(3\nu)
\]

and simplifying yields Eq.~\eqref{eq:S_chi_representation}.

\end{proof}

The representation obtained in Proposition~\ref{prop:CHSH_chi}
reveals an intriguing feature of the CHSH function.

The two boundary-function evaluations appearing in
Eq.~\eqref{eq:S_chi_representation} are taken at the arguments

\[
\nu \qquad\text{and}\qquad \pi-3\nu.
\]

These arguments coincide precisely when

\[
\nu=\frac{\pi}{4}.
\]

Consequently, the point \(\nu=\pi/4\) occupies a distinguished
position within the reflection-symmetric family \(\mathcal
R_{\mathrm{adm}}\).

The first indication of this special role is provided by the
following proposition.

\begin{proposition}[Universal Stationary Point]
\label{thm:stationary_point}

Let
\[
\chi\in\mathcal R_{\mathrm{adm}}.
\]

Then the associated CHSH function satisfies

\[
S'\!\left(\frac{\pi}{4}\right)=0.
\]

\end{proposition}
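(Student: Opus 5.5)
The plan is to reduce the statement to a symmetry property of the correlation function $E$ itself, rather than differentiating the closed-form expression \eqref{eq:S_chi_representation} of Proposition~\ref{prop:CHSH_chi} directly. Differentiating Definition~\ref{def:CHSH_function} gives
\[
S'(\nu)=3E'(\nu)-3E'(3\nu),
\]
so it suffices to show that $E'(\pi/4)=E'(3\pi/4)$. Since $\pi/6<\pi/4<\pi/3$, the point $\nu=\pi/4$ lies in the domain \eqref{eq:nu_domain}. Moreover $\pi/4$ lies on the left branch and $3\pi/4$ on the right branch of Theorem~\ref{thm:general_correlation}.

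\textbf{Step 1: antisymmetry of $E$.} The key step is to show that, for $\chi\in\mathcal R_{\mathrm{adm}}$, the correlation is antisymmetric under $\gamma\mapsto\pi-\gamma$:
\[
E(\pi-\gamma)=-E(\gamma),\qquad 0<\gamma<\tfrac{\pi}{2}.
\]
To verify this, take $0<\gamma<\pi/2$. Then $\pi-\gamma\in(\pi/2,\pi)$, so the right-branch formula of Theorem~\ref{thm:general_correlation} applies. Substituting $\chi(\pi-\gamma)=\pi-\chi(\gamma)$ from \eqref{eq:reflection_symmetry}, the numerator becomes
\[
4(\pi-\gamma)-\pi-2\bigl(\pi-\chi(\gamma)\bigr)=-\bigl(4\gamma-\pi-2\chi(\gamma)\bigr),
\]
and the denominator becomes
\[
\pi+2\bigl(\pi-\chi(\gamma)\bigr)=3\pi-2\chi(\gamma).
\]
The resulting quotient is exactly minus the left-branch expression for $E(\gamma)$, which proves the antisymmetry.

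\textbf{Step 2: differentiability.} Since $\chi\in C^1$, each branch of $E$ is $C^1$ wherever its denominator does not vanish. On the left branch, admissibility gives $\chi(\gamma)<\pi/2+\gamma<\pi$, so $3\pi-2\chi(\gamma)>\pi>0$. On the right branch, $\chi(\gamma)>\gamma-\pi/2>0$, so $\pi+2\chi(\gamma)>\pi>0$. Hence $E$ is differentiable at $\pi/4$ and at $3\pi/4$, and $S$ is differentiable near $\pi/4$.

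\textbf{Step 3: conclusion.} Differentiating the identity of Step 1 with respect to $\gamma$ gives
\[
-E'(\pi-\gamma)=-E'(\gamma),\qquad\text{that is,}\qquad E'(\pi-\gamma)=E'(\gamma).
\]
Evaluating at $\gamma=\pi/4$ yields $E'(3\pi/4)=E'(\pi/4)$. Therefore
\[
S'\!\left(\tfrac{\pi}{4}\right)=3\bigl[E'(\tfrac{\pi}{4})-E'(\tfrac{3\pi}{4})\bigr]=0.
\]

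The only real obstacle is recognizing the antisymmetry $E(\pi-\gamma)=-E(\gamma)$ as the mechanism. A brute-force differentiation of \eqref{eq:S_chi_representation} would also work: using $\chi(\pi-3\nu)=\chi(\nu)$ at $\nu=\pi/4$ and $\frac{d}{d\nu}\chi(\pi-3\nu)=-3\chi'(\pi-3\nu)$, the derivative terms cancel. However, that route is messier, and the symmetry argument is the one I would write.
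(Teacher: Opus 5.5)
Your argument is correct, but it takes a different route from the paper. The paper differentiates the closed form \eqref{eq:S_chi_representation}, a rational function of $\chi(\nu)$ and $\chi(\pi-3\nu)$, and evaluates it at $\nu=\pi/4$, where the two arguments coincide. The substance of that proof is a quotient-rule cancellation of the $\chi'(\pi/4)$ terms. Write $c=\chi(\pi/4)$ and $d=\chi'(\pi/4)$; then the numerator and denominator satisfy $N=-8c(3\pi-2c)$, $N'=-32cd$, $D=(3\pi-2c)^2$ and $D'=4d(3\pi-2c)$, so $N'D-ND'=0$.

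You instead isolate the mechanism. Reflection symmetry of $\chi$ makes $E$ odd about $\gamma=\pi/2$, so $E'$ is even about $\pi/2$. Then $S'(\nu)=3[E'(\nu)-E'(3\nu)]$ vanishes exactly where $3\nu=\pi-\nu$. The paper's Proposition~\ref{prop:CHSH_chi} in fact uses your antisymmetry without stating it: its rewritten $E(3\nu)$ is exactly $-E(\pi-3\nu)$, i.e.\ $S(\nu)=3E(\nu)+E(\pi-3\nu)$. Leaving this implicit is why the paper's stationarity proof has to go through the algebra.

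Your route gives three things:
\begin{itemize}
\item brevity;
\item an explanation of why $\nu=\pi/4$ is singled out;
\item more generality: it holds for any correlation function that is odd about $\pi/2$ and differentiable at $\pi/4$ and $3\pi/4$, so differentiability of $\chi$ at $\pi/4$ alone suffices rather than $\chi\in C^1([0,\pi])$.
\end{itemize}
The paper's route has the advantage that the same formula immediately gives $S(\pi/4)=-8c/(3\pi-2c)$, which it reuses in the proof of Theorem~\ref{thm:CHSH_contrast}.
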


Proposition~\ref{thm:stationary_point} establishes that the point

\[
\nu=\frac{\pi}{4}
\]

is a stationary point of the CHSH function independently of the
particular admissible boundary function.

Thus, the stationary-point property of the quantum singlet
correlation is not unique to the quantum solution itself, but
follows from the reflection symmetry characterizing the family
$\mathcal R_{\mathrm{adm}}$.

The proof of proposition~\ref{thm:stationary_point} follows
directly from Eq.~\eqref{eq:S_chi_representation} upon
differentiation and evaluation at \(\nu=\pi/4\).

However, the significance of the point \(\nu=\pi/4\) extends
beyond the existence of a stationary point.

As we shall show below, the value of the CHSH function at this
distinguished configuration admits an exact geometric
interpretation in terms of the hidden-variable distributions
associated with complementary angular separations.

To formulate this result, we first introduce a measure of the
difference between complementary admissible ensembles.

\begin{definition}[Geometric Contrast Function]
\label{def:geometric_contrast} The geometric contrast between
complementary configurations is

\[
D(\gamma) = \int_\Lambda \left| \rho_\gamma(\boldsymbol{\lambda})
- \rho_{\pi-\gamma}(\boldsymbol{\lambda}) \right| \,d\lambda ,
\]

where $\rho_\gamma(\boldsymbol{\lambda})$ is the normalized
hidden-variable distribution given in
Definition~\ref{def:admissible_distribution}.

\end{definition}

By construction,

\[
0 \le D(\gamma) \le 2.
\]

The lower bound is attained when the two admissible distributions
coincide, whereas the upper bound corresponds to distributions
with disjoint support.

The geometric contrast function quantifies the extent to which the
hidden-variable ensemble changes under the transformation

\[
\gamma \longrightarrow \pi-\gamma .
\]

The remarkable fact is that, at the distinguished point
\(\nu=\pi/4\), this purely geometric quantity turns out to be
directly related to the observable CHSH value.

The following theorem establishes this connection.

\begin{theorem}[Geometric Interpretation of the CHSH Value]
\label{thm:CHSH_contrast}

Let
\[
\chi\in\mathcal R_{\mathrm{adm}}.
\]

Then the CHSH function evaluated at the distinguished point

\[
\nu=\frac{\pi}{4}
\]

satisfies

\begin{equation}
\label{eq:CHSH_contrast}
\left|S\!\left(\frac{\pi}{4}\right)\right| =
2+3\,D\!\left(\frac{\pi}{4}\right),
\end{equation}

where \(D(\gamma)\) is the geometric contrast function defined in
Definition~\ref{def:geometric_contrast}.

\end{theorem}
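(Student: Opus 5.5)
The plan is to reduce both sides of Eq.~\eqref{eq:CHSH_contrast} to explicit functions of the single number $c\equiv\chi(\pi/4)$ and compare them. Admissibility (Definition~\ref{def:admissible_chi}) gives $\pi/2<c<3\pi/4$. The reflection symmetry \eqref{eq:reflection_symmetry} gives $\chi(3\pi/4)=\pi-c$.

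\emph{Left-hand side.} I will use Proposition~\ref{prop:CHSH_chi}. At $\nu=\pi/4$ we have $\pi-3\nu=\pi/4=\nu$, so both boundary-function evaluations in Eq.~\eqref{eq:S_chi_representation} equal $c$. The numerator bracket becomes
\[
c\left(\tfrac{3\pi}{4}-2c\right)+\tfrac{9\pi}{4}c = c\,(3\pi-2c),
\]
and the denominator is $(3\pi-2c)^2$. Cancelling one factor gives
\[
S\!\left(\frac{\pi}{4}\right) = -\frac{8c}{3\pi-2c}.
\]
Since $c>0$ and $3\pi-2c>0$, this yields $|S(\pi/4)|=8c/(3\pi-2c)$.

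\emph{Right-hand side.} I will compute $D(\pi/4)$ directly from Definitions~\ref{def:Gamma}--\ref{def:admissible_distribution}. The indicator depends only on $\phi$, so $d\lambda=\sin\theta\,d\theta\,d\phi$ integrates over $\theta$ to a factor $2$ per unit of azimuthal measure. Each antipodal forbidden pair has area $2(2c-\pi)$:
\begin{itemize}
\item For $\gamma=\pi/4$, the removed $\phi$-intervals are $(\pi/2,c)$ and $(3\pi/2,c+\pi)$.
\item For $\gamma=3\pi/4$, they are $(\pi-c,\pi/2)$ and $(2\pi-c,3\pi/2)$.
\end{itemize}
Hence $\Omega(\pi/4)=\Omega(3\pi/4)=4\pi-2(2c-\pi)=2(3\pi-2c)$, so the two normalized densities take the same value $1/\Omega$ on their supports.

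The key geometric observation is that the two forbidden sets are disjoint up to measure zero, since one lies just above $\pi/2$ and $3\pi/2$ and the other just below. Therefore $|\rho_{\pi/4}-\rho_{3\pi/4}|$ equals $1/\Omega$ on the union of the two forbidden sets and vanishes elsewhere. This gives
\[
D\!\left(\frac{\pi}{4}\right)=\frac{4(2c-\pi)}{2(3\pi-2c)}=\frac{2(2c-\pi)}{3\pi-2c}.
\]

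\emph{Comparison.} Finally,
\[
2+3D\!\left(\frac{\pi}{4}\right)=\frac{2(3\pi-2c)+6(2c-\pi)}{3\pi-2c}=\frac{8c}{3\pi-2c},
\]
which matches $|S(\pi/4)|$. The main point needing care is the middle step. One must verify disjointness of the forbidden sectors, which uses $\pi-c<\pi/2<c$ from admissibility. One must also get the equality of the normalizations right; this is exactly where the reflection symmetry enters. Without it, $\Omega(\pi/4)\neq\Omega(3\pi/4)$ and the overlap region would contribute to $D$.
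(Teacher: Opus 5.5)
Your proposal is correct and follows the paper's route. Both sides are reduced to functions of $c=\chi(\pi/4)$: $S(\pi/4)=-8c/(3\pi-2c)$ comes from Proposition~\ref{prop:CHSH_chi}, and $D(\pi/4)=2(2c-\pi)/(3\pi-2c)$ comes from the equal normalizations that reflection symmetry forces; your explicit disjointness check for the forbidden sectors just spells out, at $\gamma=\pi/4$, the ``straightforward decomposition'' the paper uses to obtain $\int_0^{2\pi}|\mathcal I_\gamma-\mathcal I_{\pi-\gamma}|\,d\phi=4\chi(\gamma)-2\pi$.
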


The proof is given in Appendix~\ref{subsec:proofCHSH_contrast}.

Theorem~\ref{thm:CHSH_contrast} establishes an exact connection
between an observable quantity and the geometry of the underlying
hidden-variable ensemble.

Indeed, Eq.~\eqref{eq:CHSH_contrast} shows that, within the
reflection-symmetric admissible family \(\mathcal
R_{\mathrm{adm}}\), the CHSH value at the distinguished point
\(\nu=\pi/4\) is completely determined by the geometric contrast
between the complementary admissible distributions associated with
\(\gamma\) and \(\pi-\gamma\).

In particular,

\[
D\!\left(\frac{\pi}{4}\right)=0
\]

implies

\[
\left|S\!\left(\frac{\pi}{4}\right)\right|=2,
\]

whereas increasing geometric contrast leads to a proportional
increase of the CHSH value according to
Eq.~\eqref{eq:CHSH_contrast}.

The geometric contrast function therefore plays the role of a
natural order parameter for the reflection-symmetric admissible
family.

For the quantum singlet solution one finds

\begin{equation}
D_{\mathrm s}\!\left(\frac{\pi}{4}\right) =
\frac{2}{3}\left(\sqrt2-1\right). \label{eq:singlet_contrast}
\end{equation}

Substituting Eq.~\eqref{eq:singlet_contrast} into
Eq.~\eqref{eq:CHSH_contrast} immediately yields

\[
\left|S_{\mathrm s}\!\left(\frac{\pi}{4}\right)\right| = 2\sqrt2,
\]

namely the Tsirelson bound.

Thus, within the present framework, the Tsirelson value admits a
simple geometric interpretation: it corresponds to a specific
contrast between the complementary hidden-variable ensembles
associated with

\[
\gamma=\frac{\pi}{4} \qquad\text{and}\qquad \gamma=\frac{3\pi}{4}.
\]

This result suggests that geometric properties of the underlying
hidden-variable distributions may play a deeper role in the
structure of admissible correlations than is apparent from the
observable correlations alone.

In particular, it raises the question of whether distinguished
subclasses of admissible boundary functions exist for which
extremal CHSH values are governed by corresponding extremal
properties of the geometric contrast function. We leave this
question for future investigation.

\section{Conclusions}
\label{sec:conclusions}

In this work we have introduced a geometrically constrained
hidden-variable framework that generates a continuous family of
admissible correlations parametrized by a boundary function
$\chi(\gamma)$. In contrast to approaches that begin with a
specific target correlation and seek a hidden-variable model
capable of reproducing it \cite{hall2010,brans1988,arroyo2025},
the present framework characterizes an entire family of admissible
correlations within which the quantum singlet correlation emerges
as a particular member.

Rather than viewing the quantum singlet correlation as an isolated
target to be reproduced, the present framework treats it as one
element of a larger geometric structure. From this perspective,
the central problem becomes the characterization of the admissible
correlation space.

The main results can be summarized as follows.

\begin{itemize}
\item We defined the admissible boundary function $\chi(\gamma)$
through simple inequalities (Definition~\ref{def:admissible_chi})
and showed that every continuous admissible function passes
through the three fixed points $(0,\pi/2)$, $(\pi/2,\pi/2)$ and
$(\pi,\pi/2)$ (Eq.~\eqref{eqvalueschis}). The admissible region
$\Gamma_\gamma$ and the associated distribution
$\rho_\gamma(\boldsymbol{\lambda})$ were constructed explicitly
(Definitions~\ref{def:Gamma}, \ref{def:admissibility_function},
\ref{def:admissible_area} and \ref{def:admissible_distribution}).

\item We proved that for every admissible $\chi$ the local
marginals vanish identically, $\langle A\rangle=\langle
B\rangle=0$ (Theorem~\ref{thm:vanishing_marginals}), so that the
geometric restrictions introduce no local statistical bias. The
correlation function $E(\gamma)=\langle AB\rangle$ was derived in
closed form (Theorem~\ref{thm:general_correlation}), revealing how
the boundary function $\chi(\gamma)$ controls the observable
correlations.

\item Within the reflection-symmetric subclass
$\mathcal{R}_{\mathrm{adm}}$ (defined by
$\chi(\pi-\gamma)=\pi-\chi(\gamma)$), we analyzed three
distinguished members: the constant model (which reproduces the
linear Bell correlation), the quadratic model (the simplest
admissible polynomial), and the quantum singlet solution
$\chi_{\mathrm{s}}(\gamma)$ (which yields
$E(\gamma)=-\cos\gamma$).

\item For the reflection-symmetric family we proved that the CHSH
function $S(\nu)=3E(\nu)-E(3\nu)$ satisfies $S'(\pi/4)=0$
universally (Proposition~\ref{thm:stationary_point}). Hence the
stationary point at $\nu=\pi/4$ is not unique to the quantum
singlet solution, but follows from the reflection symmetry
characterizing the family $\mathcal R_{\mathrm{adm}}$.

\item Finally, we established an exact relation between the CHSH
value at the distinguished point $\nu=\pi/4$ and the geometric
contrast $D(\pi/4)$ (Theorem~\ref{thm:CHSH_contrast}):
\[
\left|S\!\left(\frac{\pi}{4}\right)\right| = 2 +
3\,D\!\left(\frac{\pi}{4}\right).
\]
This identity provides a direct link between an observable
Bell-CHSH quantity and a purely geometric property of the
underlying hidden-variable ensembles. For the quantum singlet
solution, $D_{\mathrm{s}}(\pi/4)=\frac{2}{3}(\sqrt2-1)$, which
yields the Tsirelson value $|S_{\mathrm{s}}(\pi/4)|=2\sqrt2$.
\end{itemize}

The present framework suggests several directions for future
research.

First, it would be desirable to characterize completely the space
of admissible boundary functions $\chi(\gamma)$ and to understand
which additional conditions (analyticity, higher-order
differentiability, or extremal properties of the geometric
contrast) select distinguished subclasses. The fact that the
quantum singlet correlation appears as a distinguished member of a
continuous admissible family suggests that the problem of
identifying a principle that singles out the quantum solution from
a broader admissible correlation space is a well-posed
mathematical question. We hope that the framework developed here
will stimulate further investigations along these lines.

Second, the relation between $S(\pi/4)$ and $D(\pi/4)$ raises the
question whether the geometric contrast $D(\gamma)$ itself might
be constrained by a variational principle. If such a principle
exists, it could provide a geometric route to understanding why
the Tsirelson value is selected in the quantum singlet case.

Third, the construction presented here is restricted to the
simplest bipartite scenario with a singlet-like correlation.
Extensions to higher-dimensional systems, to more than two
particles, or to other entangled states would be natural next
steps. It would also be interesting to investigate whether the
geometric contrast function \(D(\gamma)\) admits an operational
reconstruction from correlation data.

The present work contributes to a growing body of research
investigating global constraints on the space of physically
realizable configurations and their implications for
Bell-statistical independence
\cite{hance2022,hance2025,fourny2026}. Within this broader
context, the geometric framework developed here provides an
explicit and analytically tractable example in which a continuous
family of admissible correlations can be characterized exactly,
thereby opening new avenues for exploring the geometric structure
underlying quantum and post-quantum correlations
\cite{popescu1994}.

\appendix
\setcounter{equation}{0}

\def\thesection{\Alph{section}}
\renewcommand{\theequation}{\Alph{section}.\arabic{equation}}

\section{Proofs of Theorems}
\label{app:proofs}

\subsection{Proof of Theorem~\ref{thm:vanishing_marginals}}
\label{subsec:antipodal}

Recall that for a fixed angular separation \(\gamma\), the
admissible hidden-variable region \(\Gamma_\gamma\) is defined in
terms of the azimuthal coordinate \(\phi\).

Since the admissibility condition depends only on \(\phi\), the
integration over the polar coordinate \(\theta\) contributes the
same multiplicative factor to both numerator and denominator in
all expectation values. Consequently, the proof may be carried out
entirely in terms of the azimuthal coordinate.

The set of allowed azimuths, denoted by \(\Phi_\gamma\), satisfies
the antipodal symmetry

\[
\Phi_\gamma+\pi=\Phi_\gamma,
\]

because the forbidden intervals come in opposite pairs: if
\(\phi\) is forbidden, then \(\phi+\pi\) is also forbidden, and
vice versa. This is a direct consequence of the construction in
Definition~\ref{def:Gamma}.

Now consider the local measurement functions expressed in the
chosen coordinates:

\[
A(\phi) = \operatorname{sign}(\cos\phi), \qquad B(\phi,\gamma) =
-\operatorname{sign}\!\bigl(\cos(\phi-\gamma)\bigr).
\]

Both functions are odd under a half-turn rotation of the azimuth:

\[
A(\phi+\pi) = \operatorname{sign}\!\bigl(\cos(\phi+\pi)\bigr) =
\operatorname{sign}(-\cos\phi) = -\operatorname{sign}(\cos\phi) =
-A(\phi),
\]

and similarly

\[
B(\phi+\pi,\gamma) =
-\operatorname{sign}\!\bigl(\cos(\phi+\pi-\gamma)\bigr) =
-\operatorname{sign}\!\bigl(-\cos(\phi-\gamma)\bigr) =
\operatorname{sign}\!\bigl(\cos(\phi-\gamma)\bigr) =
-B(\phi,\gamma).
\]

The hidden-variable distribution \(\rho_\gamma(\phi)\) is uniform
over the allowed set \(\Phi_\gamma\) (since the area element on
the sphere reduces to a constant factor that cancels between
numerator and denominator, and the \(\theta\) dependence drops
out). Therefore,

\[
\langle A \rangle = \frac{1}{|\Phi_\gamma|} \int_{\Phi_\gamma}
A(\phi)\,d\phi, \qquad \langle B \rangle = \frac{1}{|\Phi_\gamma|}
\int_{\Phi_\gamma} B(\phi,\gamma)\,d\phi,
\]

where \(|\Phi_\gamma|\) denotes the total length (measure) of
\(\Phi_\gamma\).

Because \(\Phi_\gamma\) is invariant under \(\phi \to \phi+\pi\)
and \(A\), \(B\) are odd with respect to this translation, the
integrand \(A(\phi)\) over the whole domain pairs opposite points
\(\phi\) and \(\phi+\pi\) that contribute equal magnitude with
opposite signs. Hence each integral vanishes:

\[
\int_{\Phi_\gamma} A(\phi)\,d\phi = 0, \qquad \int_{\Phi_\gamma}
B(\phi,\gamma)\,d\phi = 0.
\]

Consequently,

\[
\langle A \rangle = 0, \qquad \langle B \rangle = 0,
\]

independently of the particular admissible boundary function
\(\chi(\gamma)\).

Since $A$ and $B$ are binary-valued random variables, the
conditions $\langle A\rangle=0$ and $\langle B\rangle=0$ imply
\[
P(A=+1\mid\gamma)=P(A=-1\mid\gamma)=\frac12,
\]
and similarly
\[
P(B=+1\mid\gamma)=P(B=-1\mid\gamma)=\frac12.
\]

i.e., all models in the family produce perfectly unbiased local
marginals.

\subsection{Proof of Theorem~\ref{thm:general_correlation}}
\label{subsec:proof_general_correlation}

We now derive the general expression for the correlation function
associated with an arbitrary admissible boundary function
$\chi(\gamma)$.

From Eqs.~\eqref{eq:A_definition} and \eqref{eq:B_definition}, the
product observable is

\begin{equation}
A(\mathbf a,\boldsymbol{\lambda}) B(\mathbf
b,\boldsymbol{\lambda}) = -\operatorname{sign} \!\left[ \cos\phi\,
\cos(\phi-\gamma) \right]. \label{eq:AB_product_proof}
\end{equation}

Since the admissibility condition depends only on the azimuthal
coordinate $\phi$, the integration over the polar coordinate
$\theta$ contributes the same multiplicative factor to numerator
and denominator. Hence the correlation can be computed as

\begin{equation}
E(\gamma) = \langle AB\rangle = - \frac{ \displaystyle
\int_{\Phi_\gamma} \operatorname{sign} \!\left[ \cos\phi\,
\cos(\phi-\gamma) \right] \,d\phi }{ \displaystyle
\int_{\Phi_\gamma} d\phi }, \label{eq:E_phi_integral}
\end{equation}

where $\Phi_\gamma$ denotes the set of allowed azimuthal angles
corresponding to $\Gamma_\gamma$.

Equivalently, using the admissibility function $\mathcal
I_\gamma(\phi)$ introduced in
Eq.~\eqref{eq:admissibility_representation}, this can be written
as an integral over the full interval $[0,2\pi]$:

\begin{equation}
E(\gamma) = - \frac{ \displaystyle \int_{0}^{2\pi}
\operatorname{sign} \!\left[ \cos\phi\, \cos(\phi-\gamma) \right]
\mathcal I_\gamma(\phi) \,d\phi }{ \displaystyle \int_{0}^{2\pi}
\mathcal I_\gamma(\phi) \,d\phi }. \label{eq:E_phi_integral_full}
\end{equation}

We now evaluate Eq.~\eqref{eq:E_phi_integral_full} separately in
the two admissible branches.

\paragraph{Case I: \(0<\gamma<\pi/2\).}

In this case,

\begin{equation}
\frac{\pi}{2} < \chi(\gamma) < \frac{\pi}{2}+\gamma.
\label{eq:case_left_chi}
\end{equation}

The admissibility function removes the two azimuthal intervals

\[
\left(\frac{\pi}{2},\chi(\gamma)\right), \qquad
\left(\frac{3\pi}{2},\chi(\gamma)+\pi\right).
\]

Hence the total allowed length is

\begin{equation}
\int_0^{2\pi} \mathcal I_\gamma(\phi)\,d\phi = 2\pi - 2\left(
\chi(\gamma)-\frac{\pi}{2} \right) = 3\pi-2\chi(\gamma).
\label{eq:allowed_length_left}
\end{equation}

For \(0<\gamma<\pi/2\), the function

\[
\operatorname{sign} \!\left[ \cos\phi\, \cos(\phi-\gamma) \right]
\]

changes sign at

\[
\phi=\frac{\pi}{2}, \qquad \phi=\gamma+\frac{\pi}{2}, \qquad
\phi=\frac{3\pi}{2}, \qquad \phi=\gamma+\frac{3\pi}{2}.
\]

On the two excluded intervals
\[
\left(\frac{\pi}{2},\chi(\gamma)\right), \qquad
\left(\frac{3\pi}{2},\chi(\gamma)+\pi\right),
\]
one has

\[
\operatorname{sign} \!\left[ \cos\phi\, \cos(\phi-\gamma) \right]
= -1.
\]

The integral of this sign function over the full interval
\([0,2\pi]\) is

\begin{equation}
\int_0^{2\pi} \operatorname{sign} \!\left[ \cos\phi\,
\cos(\phi-\gamma) \right] d\phi = 2\pi-4\gamma.
\label{eq:full_sign_integral}
\end{equation}

Therefore, after removing the forbidden sectors, the numerator in
Eq.~\eqref{eq:E_phi_integral_full} becomes

\begin{align}
\int_0^{2\pi} \operatorname{sign} \!\left[ \cos\phi\,
\cos(\phi-\gamma) \right] \mathcal I_\gamma(\phi) \,d\phi &=
(2\pi-4\gamma) - \left[ -2\left( \chi(\gamma)-\frac{\pi}{2}
\right) \right]
\nonumber\\
&= \pi-4\gamma+2\chi(\gamma). \label{eq:sign_numerator_left}
\end{align}

Substituting Eqs.~\eqref{eq:allowed_length_left} and
\eqref{eq:sign_numerator_left} into
Eq.~\eqref{eq:E_phi_integral_full}, and remembering the overall
minus sign in Eq.~\eqref{eq:E_phi_integral_full}, gives

\begin{equation}
E(\gamma) = \frac{ 4\gamma-\pi-2\chi(\gamma) }{ 3\pi-2\chi(\gamma)
}, \qquad 0<\gamma<\frac{\pi}{2}. \label{eq:E_left_branch}
\end{equation}

\paragraph{Case II: \(\pi/2<\gamma<\pi\).}

In this case,

\begin{equation}
\gamma-\frac{\pi}{2} < \chi(\gamma) < \frac{\pi}{2}.
\label{eq:case_right_chi}
\end{equation}

The admissibility function removes the two azimuthal intervals

\[
\left(\chi(\gamma),\frac{\pi}{2}\right), \qquad
\left(\chi(\gamma)+\pi,\frac{3\pi}{2}\right).
\]

Hence the total allowed length is

\begin{equation}
\int_0^{2\pi} \mathcal I_\gamma(\phi)\,d\phi = 2\pi - 2\left(
\frac{\pi}{2}-\chi(\gamma) \right) = \pi+2\chi(\gamma).
\label{eq:allowed_length_right}
\end{equation}

For \(\pi/2<\gamma<\pi\), the same full-circle integral evaluated
in the previous case is obtained:

\begin{equation}
\int_0^{2\pi} \operatorname{sign} \!\left[ \cos\phi\,
\cos(\phi-\gamma) \right] d\phi = 2\pi-4\gamma.
\label{eq:full_sign_integral_right}
\end{equation}

On the two excluded intervals
\[
\left(\chi(\gamma),\frac{\pi}{2}\right), \qquad
\left(\chi(\gamma)+\pi,\frac{3\pi}{2}\right),
\]
one has

\[
\operatorname{sign} \!\left[ \cos\phi\, \cos(\phi-\gamma) \right]
= +1.
\]

Thus the contribution removed from the full-circle integral is

\[
2\left( \frac{\pi}{2}-\chi(\gamma) \right) = \pi-2\chi(\gamma).
\]

Therefore,

\begin{align}
\int_0^{2\pi} \operatorname{sign} \!\left[ \cos\phi\,
\cos(\phi-\gamma) \right] \mathcal I_\gamma(\phi) \,d\phi &=
(2\pi-4\gamma) - \left( \pi-2\chi(\gamma) \right)
\nonumber\\
&= \pi-4\gamma+2\chi(\gamma). \label{eq:sign_numerator_right}
\end{align}

Substituting Eqs.~\eqref{eq:allowed_length_right} and
\eqref{eq:sign_numerator_right} into
Eq.~\eqref{eq:E_phi_integral_full}, we obtain

\begin{equation}
E(\gamma) = \frac{ 4\gamma-\pi-2\chi(\gamma) }{ \pi+2\chi(\gamma)
}, \qquad \frac{\pi}{2}<\gamma<\pi. \label{eq:E_right_branch}
\end{equation}

Combining Eqs.~\eqref{eq:E_left_branch} and
\eqref{eq:E_right_branch}, we obtain

\begin{equation}
\label{eq:general_correlation_derived} E(\gamma) =
\begin{cases}
\displaystyle \frac{ 4\gamma-\pi-2\chi(\gamma) }{
3\pi-2\chi(\gamma) }, & 0<\gamma<\frac{\pi}{2},
\\[4mm]
\displaystyle \frac{ 4\gamma-\pi-2\chi(\gamma) }{
\pi+2\chi(\gamma) }, & \frac{\pi}{2}<\gamma<\pi.
\end{cases}
\end{equation}

This proves Theorem~\ref{thm:general_correlation}.

\subsection{Proof of Theorem~\ref{thm:CHSH_contrast}}
\label{subsec:proofCHSH_contrast} We begin by evaluating the
geometric contrast function \(D(\gamma)\).

Using Definition~\ref{def:admissible_distribution} and
Definition~\ref{def:geometric_contrast}, we obtain

\[
D(\gamma) = \int_{\Lambda} \left| \frac{\mathcal
I_{\gamma}(\boldsymbol{\lambda})} {\Omega(\gamma)} -
\frac{\mathcal I_{\pi-\gamma}(\boldsymbol{\lambda})}
{\Omega(\pi-\gamma)} \right| \, d\lambda .
\]

Since the admissibility functions \(\mathcal I_{\gamma}\) and
\(\mathcal I_{\pi-\gamma}\) depend only on the azimuthal
coordinate \(\phi\), the integration over \(\theta\) contributes
the same multiplicative factor to both terms.

Consequently, the $\theta$-integration factors out and cancels
against the corresponding normalization factors. The geometric
contrast therefore reduces to a purely azimuthal integral.

Defining

\[
L_{\gamma} = \int_{0}^{2\pi} \mathcal I_{\gamma}(\phi)\, d\phi ,
\]

the total admissible azimuthal length associated with \(\gamma\),
we obtain

\begin{equation}
D(\gamma) = \int_{0}^{2\pi} \left| \frac{\mathcal
I_{\gamma}(\phi)} {L_{\gamma}} - \frac{\mathcal
I_{\pi-\gamma}(\phi)} {L_{\pi-\gamma}} \right| \, d\phi .
\label{eq:D_phi_representation}
\end{equation}

Equation~\eqref{eq:D_phi_representation} provides a convenient
starting point for the explicit evaluation of the geometric
contrast function.

For the purposes of Theorem~\ref{thm:CHSH_contrast}, it is
sufficient to consider the interval

\[
0<\gamma<\frac{\pi}{2},
\]

since the theorem ultimately requires the evaluation of
\(D(\pi/4)\).

From Eqs.~\eqref{eq:allowed_length_left} and
\eqref{eq:allowed_length_right}, together with the reflection
symmetry

\[
\chi(\pi-\gamma)=\pi-\chi(\gamma),
\]

one finds

\begin{align}
L_{\pi-\gamma} &= \pi+2\chi(\pi-\gamma)
\nonumber\\
&= \pi+2\bigl(\pi-\chi(\gamma)\bigr)
\nonumber\\
&= 3\pi-2\chi(\gamma)
\nonumber\\
&= L_\gamma.
\end{align}

Therefore Eq.~\eqref{eq:D_phi_representation} reduces to

\begin{equation}
D(\gamma) = \frac{1}{3\pi-2\chi(\gamma)} \int_0^{2\pi} \left|
\mathcal I_\gamma(\phi) - \mathcal I_{\pi-\gamma}(\phi) \right|
\,d\phi, \qquad 0<\gamma<\frac{\pi}{2}. \label{eq:D_gamma_reduced}
\end{equation}

Using the explicit representation
Eq.~\eqref{eq:admissibility_representation}, one finds after a
straightforward decomposition of the integration domain that

\begin{equation}
\int_0^{2\pi} \left| \mathcal I_\gamma(\phi) - \mathcal
I_{\pi-\gamma}(\phi) \right| \,d\phi = 4\chi(\gamma)-2\pi .
\label{eq:indicator_difference_integral}
\end{equation}

Substituting Eq.~\eqref{eq:indicator_difference_integral} into
Eq.~\eqref{eq:D_gamma_reduced} yields

\begin{equation}
D(\gamma) = \frac{4\chi(\gamma)-2\pi} {3\pi-2\chi(\gamma)}, \qquad
0<\gamma<\frac{\pi}{2}. \label{eq:D_gamma_chi}
\end{equation}

Evaluating Eq.~\eqref{eq:S_chi_representation} at \(\nu=\pi/4\),
the two arguments of the boundary function coincide,

\[
\pi-3\left(\frac{\pi}{4}\right)=\frac{\pi}{4}.
\]

Therefore, writing

\[
c=\chi\!\left(\frac{\pi}{4}\right),
\]

we obtain

\begin{align}
S\!\left(\frac{\pi}{4}\right) &= -\frac{ 8\left[
c\left(\frac{3\pi}{4}-2c\right) + 3\left(\frac{3\pi}{4}\right)c
\right] }{ (3\pi-2c)^2 }
\nonumber\\
&= -\frac{ 8c(3\pi-2c) }{ (3\pi-2c)^2 }
\nonumber\\
&= -\frac{8c}{3\pi-2c}. \label{eq:S_pi4_chi}
\end{align}

Since \(c>\pi/2\), the denominator \(3\pi-2c\) is positive. Hence

\begin{equation}
\left|S\!\left(\frac{\pi}{4}\right)\right| = \frac{8c}{3\pi-2c}.
\label{eq:absS_pi4_chi}
\end{equation}

On the other hand, evaluating Eq.~\eqref{eq:D_gamma_chi} at
\(\gamma=\pi/4\) gives

\begin{equation}
D\!\left(\frac{\pi}{4}\right) = \frac{4c-2\pi}{3\pi-2c}.
\label{eq:D_pi4_chi}
\end{equation}

Therefore,

\begin{align}
2+3D\!\left(\frac{\pi}{4}\right) &= 2+ 3\left(
\frac{4c-2\pi}{3\pi-2c} \right)
\nonumber\\
&= \frac{ 2(3\pi-2c)+3(4c-2\pi) }{ 3\pi-2c }
\nonumber\\
&= \frac{8c}{3\pi-2c}. \label{eq:contrast_identity_step}
\end{align}

Comparing Eqs.~\eqref{eq:absS_pi4_chi} and
\eqref{eq:contrast_identity_step}, we conclude that

\begin{equation}
\left|S\!\left(\frac{\pi}{4}\right)\right| =
2+3D\!\left(\frac{\pi}{4}\right).
\end{equation}

This proves Theorem~\ref{thm:CHSH_contrast}.

\end{document}